\newcommand{\qed}{\rule[-0.2ex]{0.3em}{1.4ex}}
\par\vspace{0.5ex}\noindent{\bf Proof:}\hspace{0.5em}}%
\newenvironment{proof}
{\par\vspace{0.5ex}\noindent{\bf Proof:}\hspace{0.5em}}%
{\nopagebreak
\strut\nopagebreak
\hspace{\fill}\qed\par\medskip\noindent}
{\par\vspace{0.5ex}\noindent{\bf Proof Attempt:}\hspace{0.5em}}%
{\nopagebreak
\strut\nopagebreak
\hspace{\fill}\qed\par\medskip\noindent}
\newenvironment{myindentpar}[1]%
  {\begin{list}{}%
          {\setlength{\leftmargin}{#1}}%
          \item[]%
  }
{\end{list}}
\newtheorem{theorem}{Theorem}
\newtheorem{lemma}{Lemma}
\newtheorem{fact}{Fact}
\newtheorem{definition}{Definition}
\newtheorem{corollary}{Corollary}
\newtheorem{convention}{Convention}
\newcommand{\IR}{{\rm\hbox{I\kern-.15em R}}}
\newcommand{\reals}{{\rm\hbox{I\kern-.15em R}}}
\newcommand{\IN}{{\rm\hbox{I\kern-.15em N}}}
\newcommand{\IZ}{{\sf\hbox{Z\kern-.40em Z}}}
\newcommand{\E}{{\rm\hbox{I\kern-.15em E}}}
\newcommand{\Prob}{\mbox{\sf \hbox{I\kern-.15em P}}}
\newcommand{\lc}{\mbox{\bf \em lc}}
\begin{document}
\title{{\bf Efficient Exact Paths For Dyck and semi-Dyck Labeled Path Reachability\thanks{This current paper uses standard definitions of Dyck and semi-Dyck languages.
The author's earlier abstracts reversed the Dyck and semi-Dyck definitions. 
An extended abstract of this paper is in the proceedings of the UEMCON 2017 conference, see~\cite{Bradford2017}.} }}
\author{
Phillip G. Bradford\thanks{\sc
Department of Computer Science and Engineering,
University of Connecticut,
Stamford, CT, USA. \ \
{\tt phillip.bradford@uconn.edu}, \ {\tt phillip.g.bradford@gmail.com}
}
}

\date{\small\today}

\maketitle
\begin{abstract}
The {\em exact path length problem} is to determine if there is a path of a given fixed cost between two vertices.
This paper focuses on the exact path problem for costs $-1,0$ or $+1$ between all pairs of vertices in an edge-weighted digraph.
The edge weights are from $\{ -1, +1 \}$.
In this case, this paper gives an $\widetilde{O}(n^{\omega})$ exact path solution.
Here $\omega$ is the best exponent for matrix multiplication and $\widetilde{O}$ is the asymptotic upper-bound mod polylog factors.

Variations of this algorithm determine which pairs of digraph nodes have Dyck or semi-Dyck labeled paths between them, 
assuming two parenthesis.
Therefore, determining digraph reachability for Dyck or semi-Dyck labeled paths costs $\widetilde{O}(n^{\omega})$.
A path label is made by concatenating all symbols along the path's edges.

The exact path length problem has many applications.
These applications include the labeled path problems given here, which in turn, also have numerous applications.
\end{abstract}

\section{Introduction}
%
%

Shortest path algorithms are a great success.
Many people use them and many vehicles are equipped with them.
Determining path reachability is also important.
Path reachability is often computed using transitive closure.

This paper efficiently solves the $0$ and $\pm 1$ exact path length problem for digraphs whose edges have weights from
$\{ -1, +1 \}$.

Context-free language constrained graph problems are fundamental to a plethora of challenges.  
This paper gives algorithms for determining Dyck (semi-Dyck) constrained paths on digraphs based on the exact path problem.
Dyck and semi-Dyck context-free languages are important.
A central application for the exact path problem is for determining Dyck and semi-Dyck constrained paths in digraphs.
Here these languages have a single parenthesis type.

\begin{definition}[Exact path length problem~\cite{NU}]
{\sf
Consider an integer edge weighted digraph $G$.
Given an integer $\kappa$, the {\em {\bf EPL} (exact path length problem)}
is to determine whether there is a path between a given pair vertices costing exactly~$\kappa$.
}
\label{Exact_Path}
\end{definition}

Nyk\"anen and Ukkonen~\cite{NU} show the general EPL is ${\cal NP}$-Complete.
They also give a pseudo-polynomial algorithm for the EPL\@. 
The current paper uses a special case of the EPL where $\kappa \in \{ -1, 0, +1 \}$ and edge costs are from the set $\{ -1, +1 \}$. 

Given these restricted edge costs, and for $\kappa \neq 0$,  applying Nyk\"anen and Ukkonen's algorithm costs
$O(n^3 + n^{\omega} \log |\kappa|)$ time\footnote{All logs are base 2 except where specified otherwise.}, see~\cite{NU}.
For $\kappa =0$, their algorithm costs $O(n^3)$.

Solving this Dyck (semi-Dyck) labeled path problem is interesting due to the close relationship between transitive closure,
Boolean and algebraic matrix multiplication, and context-free grammar recognition.
For example, Lee~\cite{Lee:02a} gives an equivalence between Context-free parsers and Boolean
matrix multiplication algorithms.

\subsection{Semi-Dyck and Dyck Constrained Graphs} 
\label{subsec:semi-dyck-and-dyck-constrained-graphs}

Dyck and semi-Dyck languages are parenthesis languages. Dyck or semi-Dyck languages with two
parenthesis symbols and $n$ total parentheses can be parsed in $O(n)$ time and space.
However, efficiently computing Dyck (and semi-Dyck) constrained reachability on digraphs seems more challenging.

Let ${\cal D}$ be a Dyck language  of one open-parentheses symbol ${\bf a}$ and  one close-parentheses symbol ${\bf a^{-1}}$.  
A sentence $w \in {\cal D}$ iff $w$ can be
reduced using right-inverse reduction, e.g. ${\bf a \, a^{-1}} = \epsilon$, to the empty string $\epsilon$.  
The Dyck language ${\cal D}$ is derivable from the grammar:
\begin{eqnarray*}
{\cal D} & \Longrightarrow & 
\epsilon \ |\ {\cal D} \; {\cal D} \ |\ {\bf a} \; {\cal D} \; {\bf a^{-1}}.
\end{eqnarray*}

Semi-Dyck languages allow reductions using both right-inverses and left-inverses ${\bf a a^{-1}} = {\bf a^{-1} a} = \epsilon$.
They are derivable from the grammar:
\begin{eqnarray*}
{\cal S} & \Longrightarrow &
\epsilon \ |\ {\cal S}{\cal S} \ |\ {\bf a} {\cal S} {\bf a^{-1}} \ | \ {\bf a^{-1}} {\cal S} {\bf a}.
\end{eqnarray*}

Dyck languages generate all strings of balanced parenthesizations.
Semi-Dyck languages generate all strings of equal numbers of matching symbols.

%
%

%
%

The next definition is similar to one in~\cite{BJM}.

\begin{definition}[Labeled Directed Graph]
{\sf
A {\em labeled directed graph} (LDG) is a multigraph $G = (\Sigma, \, V, \, E_1)$
consisting of a set $V$ of vertices and a set $E_1 \subseteq V \times V \times \Sigma$ of labeled and directed edges.

The set $\Sigma$ contains a grammar's terminals. If the grammar is Dyck (semi-Dyck), then $\Sigma$ is said to be Dyck (semi-Dyck).
}
\label{LDGs}
\end{definition}

Given Definition~\ref{LDGs}, restrict cycles to having no repeated edges.
LDGs are multigraphs.
All LDG edges are augmented with {\em label-costs}.
So each edge $e$ in $G$ has a label $l(e)$ and a label-cost $\lc(e)$.
The label-cost function is,
\begin{eqnarray*}
\lc(e)
& = &
\left\{
\begin{array}{ll}
-1 	& \mbox{ if } l(e) = a^{-1} \\
+1  	& \mbox{ if } l(e) = a.
\end{array}
\right.
\end{eqnarray*}

A $+1$ edge and a $-1$ edge may be joined to form a new $0$ label-cost edge for computing an exact path. 
After some processing, say such a new $0$ label-cost edge $e$ is created.
Then, the label-cost function extends so $\lc(e) = 0$.
This new $0$ label-cost edge is added to an augmented edge set in the LDG.
Also, $\pm 1$ and $0$ label-cost edges may be extended by adjoining $0$ label-cost edges.

The label-costs or costs are written above edges such as $e = i \stackrel{w}{\longrightarrow} j$.
Therefore, in general $w \in \{ -1,0,+1 \}$, but at the start of our algorithms assume $w \in \{ -1, +1 \}$.

\subsection{Previous Work}

Greenlaw, Hoover, and Ruzzo~\cite{GHR} discuss several formal-language based 
reachability problems. See also Afrati and Papadimitriou~\cite{Afrati:1987:PCS:28659.28682}, Reps~\cite{Reps:1996:SNI:2697697.2697789},
and Ullman and Van Gelder~\cite{4568236}.
For example, the {\em LGAP (labeled graph accessibility problem)}~\cite{GHR} is a Dyck language with constrained reachability 
problem on a directed graph $G$ that is ${\cal P}$-complete when $|\Sigma| \geq 4$.
Yannakakis~\cite[p. 237]{Yannakakis90} points out that Valiant's Boolean matrix multiplication context-free word recognition
algorithm determines single-source labeled path reachability in DAGs.
This means there is an algorithm  costing $O(n^{\omega} \log n)$ for finding context-free labeled and unweighted paths in DAGs
with $n$ vertices, where $\omega$ is the best exponent for $n\times n$ matrix multiplication.
Very efficient matrix multiplication algorithms include
results of Coppersmith and Winograd~\cite{COPPERSMITH1990}; Stothers~\cite{Stothers:2010}; Williams~\cite{Williams:2012}; and Le Gall~\cite{LeGall:2014}.
Currently, the best exponent of square matrix multiplication is $\omega < 2.373$.

Melski and Reps~\cite{MR} give an $O(n^3|S|^3)$ context-free language reachability algorithm.
Where $S$ is the set of terminals and non-terminals for the input grammar.
Barrett, Jacob, and Marathe~\cite{BJM} give an $O(n^3|R||N|)$ algorithm for finding the all-pairs shortest paths in 
context free grammar constrained path problems. Here $R$ is the set of rules and $N$ is the set of
non-terminals in Chomsky normal form\@.
This algorithm does not compute shortest paths with negative edge weights.

Alon, Galil, and Margalit~\cite{AGM} give efficient algorithms for shortest paths on digraphs with 
edge weights from $\{ -1, 0, +1 \}$ costing $\widetilde{O}(n^{\nu})$ where $\nu = \frac{3+\omega}{2}$.
See also Takaoka~\cite{Takaoka1998}.
A part of Alon, et al.'s algorithm finds zero length directed paths and uses them as short-cuts.
It may be possible to extract their $0$ length path algorithm for short-cuts as the basis of our work.
Nonetheless, Alon, et al.'s directed graph shortest-path algorithm takes $\widetilde{O}(n^{2.687})$ time when $\omega < 2.373$.

Galil and Margalit~\cite{GM} extend the results of Alon, et al.~\cite{AGM} and
integrate the shortest path distance and shortest path problem.
Zwick~\cite{Zwick:2002} gives more efficient all-pairs shortest path and path distance algorithms. 
Zwick's shortest path's cost is better than $\widetilde{O}(n^{2.575})$, since $\omega < 2.373$.
Our algorithm does not solve shortest-path problems.

Building on Barrett,~et al.,
Bradford and Thomas~\cite{BT}  give a more efficient context free label constrained shortest path algorithm
for graphs with positive and negative edge weights whose unlabeled versions have no negative cycles.
Barrett,~et al.'s algorithm for finding context free label constrained shortest paths with positive and negative edge weights costs $O(n^5 |N|^2 |R|^2)$.
Bradford~\cite{B-2007} gives a solution to a quickest-path problem for context-free grammars applied to cryptographic routing.
Bradford and Choppella~\cite{BC} use a special-case of Nyk\"anen and Ukkonen's sign-closure algorithm for DAGs with initial edge costs
from $\{ -1, +1 \}$, see also Khamespanah, Khosravi, and Sirjani~\cite{KKS}. Further Bradford and Choppella~\cite{BC}
find actual minimum-cost point-to-point Dyck paths in DAGs.
Ward, Wiegand, and Bradford~\cite{WWB} give a distributed context-free labeled graph shortest path algorithm
also based on~\cite{BJM}.
Ward and Wiegand~\cite{WW} analyze the complexity of wireless routing metrics as labeled path problems.

Chaudhuri~\cite{C} gives an $O(n^3/\log n)$ algorithm for context-free language reachability using an 
important dynamic-programming speedup method by Rytter~\cite{Rytter}.

After preprocessing a bidirected tree, Yuan and Eugster~\cite{Yuan2009} give a $O(|V| \log |V|)$ algorithm for finding Dyck reachability in bidirected trees
in $O(1)$ per query.

Zhang, Lyu, Yuan, Hao and Su~\cite{Zhang:2013:FAD:2499370.2462159} improve on  Yuan and Eugster's bidirected tree algorithm.
Zhang, et al.~\cite{Zhang:2013:FAD:2499370.2462159} also give an $O(|V| + |E| \log |E|)$ algorithm 
for determining Dyck reachability for bidirected digraphs.
Each Dyck labeled edge in a bidirected digraph has a mirror edge going in the opposite direction and with
a complimentary label.
See also~\cite{Zhang:2017:CDA:3093333.3009848,Hollingum2015}.

Khamespanah, Khosravi, and Sirjani~\cite{KKS} use Nyk\"anen and Ukkonen~\cite{NU}'s exact path algorithm 
to improve their model checking algorithms for timed actors in distributed systems.
They apply the pseudo-polynomial algorithm's $O(n^2)$ path relaxation cost,
while accepting the pre-processing costs of $O(n^3)$.
Our results break through this $O(n^3)$ barrier giving a $\widetilde{O}(n^{\omega})$ algorithm.
In general, Melski and Reps~\cite{MR} discuss the ``$O(n^3)$ bottleneck''
for context-free program analysis.
Our results solidly break through this bottleneck for the Dyck and semi-Dyck cases.

Dyck and semi-Dyck languages are also applied to data streaming, see Chakrabarti, Cormode, Kondapally and McGregor~\cite{DBLP:journals/siamcomp/ChakrabartiCKM13}.
In addition, Tang, et al.~\cite{10.1007/978-3-662-54434-1_33} apply Dyck-CLF reachability to library summarization.
Likewise, there are applications to database path queries, see Grahne, Thomo and Wadge~\cite{DBLP:journals/fuin/GrahneTW08}.
Choppella and Haynes give an equivalence between unification graphs and Dyck path
reachability problems in digraphs~\cite{CH05}.  

\subsection{Structure of this paper}

Section~\ref{sec:transformation} gives the foundations for the rest of the paper.
Section~\ref{sec:SP-distances} shows how to find efficient exact $0$ cost paths for graphs labeled with flat Dyck and semi-Dyck grammars.
Subsection~\ref{AGMY-encoding} leverages Alon, Galil, Margalit's algebraic matrix encoding~\cite{AGM} for our solution. See also Yuval~\cite{Yuval}.
Where subsection~\ref{flat-grammars} gives an exact $0$ cost path solution for flat Dyck and flat semi-Dyck grammars.

Grid graphs for exact $0$ Dyck and semi-Dyck paths are given in section~\ref{sec:grid-graphs}.
This section has a number of definitions directly in the text. This is to simplify the flow.
Finally, subsection~\ref{general-case-subsection} concludes section~\ref{sec:grid-graphs} giving the general exact $0$ path solution.

Section~\ref{sec:plus-minus-one} extends the exact $0$ cost reachability to exact $\pm 1$ reachability.
\section{Dyck Path Reachability Problem}
\label{sec:transformation}

Direct application of standard shortest path~\cite{CLRS} and transitive closure~\cite{CLRS,pwp} algorithms to LDGs does not
seem to determine $0$ reachability. 
In our instantiation of this challenge, such shortest paths use $-1$ edge weights or label-costs. 
That is, some paths may be negative. 
Indeed, shortest path algorithms gravitate towards negative paths.
For this reason, intuitively shortest path algorithms not directly applicable.

This paper converts its edge labels to label-costs from $\{ -1, 0, +1 \}$.
Therefore, rather than referring to labeled-costs, this paper just discusses edge costs.
These costs are generally restricted to $\{ -1, 0, +1 \}$.

Next are definitions for sign-closure graphs from Nyk\"anen and Ukkonen~\cite{NU}.
They also define the function ${\bf sgn}$, for $w \in \{ -1, 0, +1 \}$ so that ${\bf sgn}(w) = w$.
For any LDG $G = (\Sigma, V,E_1)$, let $M$ be a label-cost bound,
\begin{eqnarray*}
M & = & \max \{ \, |\lc(e)| : e \in E_1 \, \}.
\end{eqnarray*}

Throughout this paper, $M = 1$.

\begin{definition}[Nyk\"anen and Ukkonen~\cite{NU}]
{\sf
Consider a digraph $G =(V,E)$. The {\em sign-closure} of $E(G)$ is {\bf unsign}$(G)$ which starts with
$E(\mbox{\bf unsign}(G)) \leftarrow E(G)$, and then apply the rule:
\begin{tabbing}
text\= text \= text \= text \= text \= text \kill
\> {\bf if} $i \stackrel{v}{\longrightarrow} k \stackrel{w}{\longrightarrow} j \in E({\bf unsign}(G))$ and ${\bf sgn}(v) \neq {\bf sgn}(w)$\\
\> {\bf then} put $i \stackrel{v+w}{\longrightarrow} j$ in $E({\bf unsign}(G))$,
\end{tabbing}
until it no longer applies.
}
\label{Def:Closure}
\end{definition}

Applying Nyk\"anen and Ukkonen's sign-closure algorithm finds semi-Dyck paths in an LDG.
This relates semi-Dyck paths to transitive closure.

Changing the if-statement in Definition~\ref{Def:Closure} as follows gives {\em Dyck} sign-closure.
Given a LDG $G$, its Dyck sign-closure is ${\bf unsign}^{\geq}(G)$.
To get the Dyck sign closure of a graph $G$, apply the rule

\begin{tabbing}
text\= text \= text \= text \= text \= text \kill
\> {\bf if} $i \stackrel{v}{\longrightarrow} k \stackrel{w}{\longrightarrow} j \in E({\bf unsign}^{\geq}(G))$ and ${\bf sgn}(v) \neq {\bf sgn}(w)$ and $v \neq -1$\\
\> {\bf then} put $i \stackrel{v+w}{\longrightarrow} j$ in $E({\bf unsign}^{\geq}(G))$,
\end{tabbing}
until it no longer applies.

Nyk\"anen and Ukkonen show the sign-closure graph problem is ${\cal NP}$-Complete.
Nonetheless, Nyk\"anen and Ukkonen give a $O(M^2 n^3)$ time pseudo-polynomial algorithm for computing a sign-closure graph. 
This pseudo-polynomial algorithm runs in polynomial time for edge costs restricted to $\{ -1,0,+1 \}$ since $M=1$.
In particular, when $M=1$, computing a sign-closure graph costs $O(n^3)$ by~\cite{NU}.
We improve the cost to $\widetilde{O}(n^{\omega})$.

The basic result of the next lemma is mentioned in the proof of Theorem 5 in Nyk\"anen and Ukkonen~\cite{NU}.
Their Theorem 5 assumes their $O(M^2 n^3)$ sign-closure algorithm.
Nyk\"anen and Ukkonen were not discussing Dyck or semi-Dyck languages, but in our context, their result is as follows.

\begin{lemma}[Nyk\"anen and Ukkonen~\cite{NU}]
{\sf
Consider an LDG $G = (\Sigma,V,E_1)$ where $\Sigma$ is Dyck (semi-Dyck) and $|\Sigma|= 2$.
In computing a sign-closure with edge costs from $\{ -1, +1 \}$, then new edges added to $E(H)$ 
may be limited to costs from $\{ \ -1, 0, +1 \ \}$.
}
\label{ZSC-Gives-Paths}
\end{lemma}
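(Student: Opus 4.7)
The plan is a straightforward induction on the number of applications of the sign-closure rule in Definition~\ref{Def:Closure} (or its Dyck analogue immediately following it). I will maintain the invariant that every edge currently in $E(H)$ has label-cost in $\{-1, 0, +1\}$. The base case is immediate: because $|\Sigma|=2$ and there is a single parenthesis type, the hypothesis forces $E_1$ to start with edge costs in $\{-1, +1\} \subseteq \{-1, 0, +1\}$.

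For the inductive step, suppose the rule is about to fire on $i \stackrel{v}{\longrightarrow} k \stackrel{w}{\longrightarrow} j$ with ${\bf sgn}(v) \neq {\bf sgn}(w)$. By the inductive hypothesis $v, w \in \{-1, 0, +1\}$, so verifying the invariant reduces to a tiny finite case check. The key observation is that the paper defines ${\bf sgn}(w) = w$ on $\{-1,0,+1\}$, so ${\bf sgn}(0) = 0$ is a genuinely distinct third sign. Consequently the sign precondition blocks every pair of equal coordinates---no $(+1,+1)$, no $(-1,-1)$, no $(0,0)$---which is exactly what is needed to avoid the ``dangerous'' sums $\pm 2$. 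The surviving admissible pairs are $(+1, -1)$ and $(-1, +1)$ yielding $0$, together with $(\pm 1, 0)$ and $(0, \pm 1)$ yielding $\pm 1$. In every case the newly added edge's cost lies in $\{-1, 0, +1\}$, closing the induction.

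For the Dyck variant ${\bf unsign}^{\geq}(G)$, the extra precondition $v \neq -1$ merely eliminates the pairs $(-1, +1)$ and $(-1, 0)$ from the case analysis, leaving a strict subset of the admissible combinations. So the same bound holds a fortiori. I do not anticipate a real obstacle here: the proof amounts to the finite check that $\{-1, 0, +1\}$ is closed under the partial addition induced by the sign-difference predicate, and the only conceptually load-bearing point is that ${\bf sgn}(0) = 0$ is a third sign value rather than being collapsed to $\pm 1$.
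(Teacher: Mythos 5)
Your argument is correct. Note, however, that the paper itself supplies no proof of this lemma: it is stated as a citation of Nyk\"anen and Ukkonen, whose Theorem~5 gives the general fact that sign-closure of a graph with weights of magnitude at most $M$ only creates edges of magnitude at most $M$ (summing two values of opposite sign, or a zero with a nonzero value, cannot increase the magnitude), and the present lemma is that fact specialized to $M=1$; the Dyck case is then dismissed in the text by observing it uses a strict subset of the rule firings, exactly as in your final paragraph. Your induction on rule applications is precisely this specialization, made self-contained by the finite case check, so it is a perfectly adequate replacement for the missing citation-level argument. Two small calibration points: what actually forbids the dangerous sums $\pm 2$ is the exclusion of the pairs $(+1,+1)$ and $(-1,-1)$; the exclusion of $(0,0)$ is incidental, since that pair would only produce another $0$. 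Likewise, the fact that ${\bf sgn}(0)=0$ is a genuine third sign value is not load-bearing for the bound you are proving (collapsing it to $\pm 1$ would only remove admissible pairs); it matters instead for completeness of the closure, i.e.\ for Lemmas~\ref{DYCK} and~\ref{Min-Plus-First-Lemma}, which is a different statement from the one at hand.
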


The case of Dyck languages follows since Dyck languages are also semi-Dyck languages.
Given an LDG $G$, a {\em $0$ cost} edge (path) is an edge (path) in ${\bf unsign}(G)$ or ${\bf unsign}^{\geq}(G)$.
A $0$ cost edge has label-cost computed to be~$0$
and a $0$ cost path has total cost~$0$.

Zero cost paths are semi-Dyck paths in $G$.
A proof of the next lemma follows since semi-Dyck paths along $\pm 1$ edges have equal numbers of $+1$ and $-1$ values.
See also~\cite{BC}.

\begin{lemma} 
{\sf
Consider the LDG $G = (\Sigma, V,E_1)$ where $\Sigma$ is semi-Dyck and $|\Sigma|=2$, then
$G$ has a semi-Dyck path between $i$ and $j$ iff in $G$ there is a $0$ cost path between $i$ and $j$.
}
\label{DYCK}
\end{lemma}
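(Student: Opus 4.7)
The plan is to reduce the biconditional to the following combinatorial characterization: with $|\Sigma|=2$ consisting of the single parenthesis pair $\{{\bf a},{\bf a^{-1}}\}$, a word $w \in \{{\bf a},{\bf a^{-1}}\}^*$ lies in the semi-Dyck language $\mathcal{S}$ if and only if it contains equal numbers of ${\bf a}$ and ${\bf a^{-1}}$. One direction is immediate since each reduction rule ${\bf aa^{-1}}\to\epsilon$ and ${\bf a^{-1}a}\to\epsilon$ preserves the difference between the two counts. For the other, I would induct on $|w|$: a non-empty word with equal counts cannot be of the form ${\bf a}^k$ or $({\bf a^{-1}})^k$, so some adjacent pair of opposite symbols must exist and may be reduced; the resulting shorter word still has equal counts and, by induction, reduces to $\epsilon$.

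The forward direction $(\Rightarrow)$ of the lemma then follows immediately: if $\pi$ is a semi-Dyck path from $i$ to $j$, then by the characterization its $\pm 1$ label-costs along $\pi$ sum to $0$, and since $E_1 \subseteq E({\bf unsign}(G))$, the edge sequence of $\pi$ itself witnesses a $0$ cost path. For the backward direction $(\Leftarrow)$, I would first establish an invariant of the sign-closure construction: every edge $i \stackrel{c}{\longrightarrow} j$ in $E({\bf unsign}(G))$ is \emph{backed} by an underlying walk in $G$ from $i$ to $j$ whose $\pm 1$ label-costs sum to $c$. This follows by induction on the order in which edges are added by the rule in Definition~\ref{Def:Closure}: original edges back themselves, and a composite edge $i \stackrel{v+w}{\longrightarrow} j$ built from $i \stackrel{v}{\longrightarrow} k$ and $k \stackrel{w}{\longrightarrow} j$ is backed by the concatenation of the underlying walks for its two constituents, whose costs sum to $v+w$ by the inductive hypothesis. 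Given any $0$ cost path from $i$ to $j$ in ${\bf unsign}(G)$, replacing each edge by its underlying walk yields a walk $\pi'$ in $G$ whose $\pm 1$ label-costs still sum to $0$; hence $\pi'$ carries equal numbers of ${\bf a}$ and ${\bf a^{-1}}$ labels and, by the combinatorial characterization, is a semi-Dyck path.

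The main obstacle is the bookkeeping in the backward direction: one must verify that the cost-preservation invariant for ${\bf unsign}(G)$ actually yields a concrete walk in $G$ whose edge labels can be read off in order to form a semi-Dyck word, rather than simply noting a numerical identity on sums. Once that invariant is cleanly stated and inductively verified, the remainder reduces to the counting argument supplied by the combinatorial characterization in the first paragraph.
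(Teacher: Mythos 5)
Your proposal is correct and follows essentially the same route as the paper: the paper's justification is precisely the observation that a word over a single parenthesis pair is semi-Dyck iff it has equal numbers of ${\bf a}$ and ${\bf a^{-1}}$, i.e.\ iff the $\pm 1$ label-costs sum to $0$. You simply make explicit two steps the paper leaves implicit --- the inductive proof of that equal-counts characterization and the invariant that each sign-closure edge is backed by a walk in $G$ of the same total cost --- which is sound and consistent with the paper's definitions.
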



The next definition is well-known.

\begin{definition}[Non-negative prefix sum]
{\sf
Suppose $G$ is a LDG with a simple weighted path $p$ from $i_0$ to $i_{t+1}$:
	\begin{eqnarray*}
p \ \ = \ \ i_0 \stackrel{v_1}{\longrightarrow} i_1 \stackrel{v_2}{\longrightarrow} \cdots \stackrel{v_{t}}{\longrightarrow} i_{t} \stackrel{v_{t+1}}{\longrightarrow} i_{t+1}.
	\end{eqnarray*}
then node $i_k$ has prefix sum $v_1 + \cdots + v_{k}$ from $p=i_0$ to $i_k$ along $p$ for $k: t+1 \geq k \geq 1$. 
The prefix sum for $i_0$ is~$0$.
In a path $p$, if $p$'s prefix sums for all $0$ cost subpaths are non-negative, then the path $p$ has a non-negative prefix sum.
}
\label{Def:PrefixSum}
\end{definition}

A proof of the next lemma follows a proof of Lemma~\ref{DYCK}, see also~\cite{BC,4568236}.

\begin{lemma}
{\sf
Consider the LDG $G = (\Sigma, V,E_1)$ where $\Sigma$ is Dyck and $|\Sigma|=2$,
then $G$ has a Dyck path between $i$ and $j$ iff in $G$ there is a $0$ cost path between $i$ and $j$ having
only non-negative prefix sums.
}
\label{semi-DYCK}
\end{lemma}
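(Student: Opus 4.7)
The plan is to prove both directions by induction, exploiting the fact that with a single parenthesis type, a word is Dyck iff its sequence of $\pm 1$ label-costs sums to $0$ and has all non-negative prefix sums.

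For the forward direction, assume there is a Dyck path $p$ from $i$ to $j$, so its label word $w$ is generated by $\mathcal{D} \Longrightarrow \epsilon \mid \mathcal{D}\mathcal{D} \mid \mathbf{a}\,\mathcal{D}\,\mathbf{a^{-1}}$. I induct on the derivation. The base case $w = \epsilon$ is vacuous. If $w = w_1 w_2$ with $w_1, w_2 \in \mathcal{D}$, then by induction both subpaths have cost $0$ and non-negative prefix sums; concatenation preserves both properties since the prefix sums of $w_2$'s portion start from $0$ (the accumulated sum at the end of $w_1$). If $w = \mathbf{a}\, w'\, \mathbf{a^{-1}}$, then the opening $\mathbf{a}$ bumps the prefix sum to $+1$, the inductively non-negative prefix sums of $w'$ stay $\geq 1$, and the closing $\mathbf{a^{-1}}$ returns to $0$. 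Hence $p$ is a $0$ cost path with non-negative prefix sums.

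For the reverse direction, let $p = i_0 \connects{v_1} i_1 \connects{v_2} \cdots \connects{v_{t+1}} i_{t+1}$ be a $0$ cost path with all prefix sums non-negative, where each $v_k \in \{-1,+1\}$. I induct on the number of edges $t+1$. If $t+1 = 0$, the label word is $\epsilon \in \mathcal{D}$. Otherwise, because the prefix sum at $i_0$ is $0$ and all prefix sums are non-negative, the first edge must have $v_1 = +1$ (i.e.\ label $\mathbf{a}$), and because the total prefix sum is $0$, there is a smallest index $r \geq 1$ with prefix sum $0$ at $i_r$. Non-negativity forces $v_r = -1$ (label $\mathbf{a^{-1}}$), and the intermediate prefix sums at $i_1,\ldots,i_{r-1}$ are all $\geq 1$. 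Thus the subpath $i_1 \to \cdots \to i_{r-1}$ has $0$ total cost and non-negative prefix sums when shifted down by $1$, so by induction its label word lies in $\mathcal{D}$. Likewise, the suffix $i_r \to \cdots \to i_{t+1}$ is a shorter $0$ cost path with non-negative prefix sums and yields a word in $\mathcal{D}$. The full word is therefore of the form $\mathbf{a}\,\mathcal{D}\,\mathbf{a^{-1}}\,\mathcal{D} \in \mathcal{D}$, so $p$ is a Dyck path.

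The main obstacle is just being careful in the reverse direction about the decomposition: one must argue that the first return to prefix sum $0$ always occurs via a matching $\mathbf{a^{-1}}$ edge, and that the interior subpath's prefix sums, shifted by the $+1$ from the initial $\mathbf{a}$, remain non-negative as required by the inductive hypothesis. Everything else is routine, and the lemma follows immediately by combining these two directions, mirroring the structure of the proof of Lemma~\ref{DYCK} as referenced in the excerpt.
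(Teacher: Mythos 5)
Your proof is correct and is essentially the argument the paper intends: the paper merely asserts the lemma by pointing to Lemma~\ref{DYCK} and the references \cite{BC,4568236}, and the intended content is exactly the classical characterization you prove (a word over $\{\mathbf{a},\mathbf{a^{-1}}\}$ is Dyck iff its $\pm 1$ cost sequence sums to $0$ with all prefix sums non-negative), via grammar induction one way and the first-return-to-zero decomposition the other. You simply supply the details the paper leaves to citation, so no substantive difference in approach.
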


The next lemma includes labeled edges going from a node to itself.
This paper assumes no self-cycles with repeated edges.

\begin{lemma}
{\sf
Consider an LDG $G = (\Sigma, V,E_1)$ where $\Sigma$ is Dyck (semi-Dyck), $|\Sigma|= 2$ with sign-closure ${\mbox{\bf unsign}}^{\geq}(G)$  (${\mbox{\bf unsign}}(G)$). 
Then all vertices in $V$ have at most $3n$ outgoing edges.
}
\label{neighborBounds}
\end{lemma}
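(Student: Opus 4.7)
The plan is to prove the bound by a short counting argument built on Lemma~\ref{ZSC-Gives-Paths}. First, I would invoke that lemma to conclude that every edge present in ${\bf unsign}(G)$ (or, in the Dyck case, ${\bf unsign}^{\geq}(G)$) carries a cost in the three-element set $\{-1, 0, +1\}$: the original edges have costs $\pm 1$ by the definition of $\lc$, and any new edge produced by a closure step has cost in $\{-1, 0, +1\}$ by the lemma.

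Next I would make explicit that the sign-closure edge set is naturally a subset of $V \times V \times \{-1,0,+1\}$, so that parallel edges with identical source, target, and cost collapse to a single edge. This set-theoretic framing is already implicit in Definition~\ref{LDGs}, where $E_1 \subseteq V \times V \times \Sigma$ is written as a set (rather than a multiset), and the closure rule in Definition~\ref{Def:Closure} adds a new edge via ``put $i \stackrel{v+w}{\longrightarrow} j$ in $E({\bf unsign}(G))$'', which is a standard set insertion. Thus the closure edge set is a subset of $V \times V \times \{-1,0,+1\}$.

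The final step is the pigeonhole count. Fix any $v \in V$ and associate to each outgoing edge from $v$ the pair (target vertex, cost). The first coordinate has at most $n$ values and the second at most $3$, giving at most $3n$ outgoing edges from $v$, which is what the lemma claims.

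I do not foresee a genuine obstacle; the only conceptual point worth verifying is that the closure does not preserve distinct parallel edges of the same cost between the same endpoints, which follows from the set-valued formulation. The same argument applies uniformly to both ${\bf unsign}(G)$ (the semi-Dyck case) and ${\bf unsign}^{\geq}(G)$ (the Dyck case), since in both variants the closure rule only tightens the hypothesis for \emph{when} to add an edge and does not alter the form of the edge added.
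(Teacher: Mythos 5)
Your argument is correct: the paper states Lemma~\ref{neighborBounds} without an explicit proof, and the intended justification is exactly your counting — by Lemma~\ref{ZSC-Gives-Paths} every closure edge has cost in $\{-1,0,+1\}$, and since the closure edge set is a subset of $V \times V \times \{-1,0,+1\}$, each vertex has at most $3$ outgoing edges per target, hence at most $3n$ in total. Your explicit remark that parallel edges of equal cost collapse under the set-valued formulation is a reasonable clarification of the same approach, not a different route.
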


\section{Towards efficient exact $0$ cost paths}
\label{sec:SP-distances}

This section gives the background for determining which nodes have {\em exact} paths of costs $0$ in LDGs with $\{ \, -1, +1 \, \}$ weighted edges in $\widetilde{O}(n^{\omega})$ operations.
This new solution is expressed as flat Dyck or flat semi-Dyck paths in LDGs.
This is done by computing sign-closures of digraphs with $\{ \, -1, +1 \, \}$ edge weights.
In the process, $0$ cost edges may be added to these diagraphs.
Also, $\pm 1$ and $0$ edges are extended by $0$ cost edges.
Our algorithm uses algebraic matrix multiplication of specially coded matrices.
These matrix encodings are from Alon, Galil, and Margalit~\cite{AGM}. See also Yuval~\cite{Yuval}.
Each algebraic matrix multiplication may be done in $O(n^{\omega} \log n)$. 
This may be improved by a polylog factor, see for example~\cite{AGM,AHU:1974,C,Zwick:2002,Rytter}.

Alon, Galil, and Margalit's shortest path algorithm~\cite{AGM} starts by finding exact $0$ length paths in digraphs with edge costs $\{ -1, 0, +1 \}$.
They use these $0$ exact paths as shortcuts to find shortest paths.
Our algorithm finds $\{-1,0,+1\}$ exact paths between all pairs of vertices.
Alon, et al.'s digraph shortest path algorithm works for edges with much larger costs. 
Their digraph shortest path algorithm is substantially more costly than our digraph exact path algorithms.
Of course, they solve the shortest path algorithm where we solve a reachability problem.

Matrices are written in uppercase and their elements are written in lowercase~\cite{Zwick:2002}.
Matrix parenthesized superscripts, such as those in $\{ D^{(-1)}, D^{(0)}, D^{(+1)} \}$ signify different
matrices. These parenthesized powers are not exponentiation.
Likewise, matrix elements raised to powers, such as $d_{i,j}^{-1}, d_{i,j}^{0}, d_{i,j}^{+1}$, are not exponentiated. 
Rather these superscripts indicate the matrices these elements are from. 
In this case, $d_{i,j}^{-1}$ is in $D^{(-1)}$, $d_{i,j}^{0}$ is in $D^{(0)}$, and $d_{i,j}^{+1}$ is in $D^{(+1)}$.

The algorithm in Figure~\ref{Min-Plus-first} maintains three adjacency matrices $D^{(-1)}, D^{(0)},$ and $D^{(+1)}$.
These three adjacency matrices allow $-1, 0$ and $+1$ edges to go from any vertex to any other vertex.

Given an LDG $G=(\Sigma, V,E_1)$, where $|\Sigma|=2$, define the adjacency matrices $D^{(g)} \in \{ D^{(-1)}, D^{(0)}, D^{(+1)} \}$ whose edge costs
are from $\{ \, -1, 0, +1 \, \}$.
Before iteration $\ell = 1$, there are no $0$ cost edges in $D^{(0)}$.

\begin{eqnarray*}
d_{i,j}^{g}
& = &
\left\{
\begin{array}{rll}
-1 		& \mbox{ if } (i,j) \in E_{\ell} \mbox{ and } g = -1 = \lc(i,j)\\
0		& \mbox{ if } (i,j) \in E_{\ell} \mbox{ and } g = 0 = \lc(i,j) \mbox{ and } \ell > 1\\
+1 		& \mbox{ if } (i,j) \in E_{\ell} \mbox{ and } g = +1 = \lc(i,j)\\
\infty 	& \mbox{ otherwise.}
\end{array}
\right.
\end{eqnarray*}
%
%
%
%

Subsequently, in each iteration a new edge set $E_{\ell}$ is created in the $\ell$-th iteration of our main algorithm.
At this point, any new $\{ -1, 0, +1 \}$ cost paths are placed in $E_{\ell}$ during iteration~$\ell$.

So, during computation there may be at most three (different) labeled edges directly from any vertex $i$ to any other vertex $j$.
See Lemma~\ref{neighborBounds}.
Recall, the initial graph edges only have weights from $\{ -1, +1 \}$.
The algorithm in Figure~\ref{Min-Plus-first} implements these equations to find all $-1, 0, +1$ exact paths.
This algorithm is substantially less efficient than Nyk\"anen and Ukkonen~\cite{NU} applied to graphs with $\{ -1, +1 \}$ edges.
However, Figure~\ref{Min-Plus-first}'s algorithm forms a basis for our more efficient algorithm.

\begin{figure}[ht]
\begin{center}
  \leavevmode
\framebox[1.05\width][l]{\vbox{
\begin{tabbing}
text \= text \= text \= text \= text \= text \kill
\> {\bf Expensive-Digraph-exact-paths}: $\pm 1$, for the semi-Dyck LDG $G = (V,E_1)$\\
\> 1. $\{ \, D^{(-1)}, D^{(0)}, D^{(+1)} \, \} \leftarrow$ {\bf Init-Adjacency-Matrices}$(G)$\\
\> 2. $n \leftarrow |V|$\\
\> 3. {\bf for} $\ell \leftarrow 2$ {\bf to } $n$ {\bf do}\\ 
\> 4. \> {\bf for} $i \leftarrow 1$ {\bf to } $n$ {\bf do}\\
\> 5. \> \> {\bf for} $j \leftarrow 1$ {\bf to } $n$ {\bf do}\\
\> 6. \> \> \> {\bf for} $k \leftarrow 1$ {\bf to } $n$ {\bf do}\\
\> 7. \> \> \> \> {\bf if } $(d_{i,k}^{-1}+ d_{k,j}^{+1}) = 0 \vee (d_{i,k}^{+1}+ d_{k,j}^{-1}) = 0$ {\bf then} $d_{i,j}^{0} \leftarrow 0$\\
\> 8. \> \> \> \> {\bf if } $(d_{i,k}^{+1}+ d_{k,j}^{0}) = 1 \vee (d_{i,k}^{0}+ d_{k,j}^{+1}) = 1$ {\bf then} $d_{i,j}^{+1} \leftarrow +1$\\
\> 9. \> \> \> \> {\bf if } $(d_{i,k}^{-1}+ d_{k,j}^{0}) = -1 \vee (d_{i,k}^{0}+ d_{k,j}^{-1}) = -1$ {\bf then} $d_{i,j}^{-1} \leftarrow -1$
\end{tabbing}
}}
\end{center}
\vspace{-0.125in}
\caption{An inefficient $-1, 0, +1$ exact path algorithm for digraphs with initial edge costs $\{ \, -1, +1 \, \}$}
\label{Min-Plus-first}
\end{figure}

The function {\bf Init-Adjacency-Matrices} in Figure~\ref{Min-Plus-first} initializes each of $D^{(-1)}, D^{(0)},$ and $D^{(+1)}$ with sufficiently large
values representing no edge and no path.
No path and no edge in the algorithm in Figure~\ref{Min-Plus-first} may be represented by numbers as little as~$2$.
Although, if there is no $t$ to get from $i$ to $j$, then $d^{t}_{i,j}$ is effectively infinite, for any $t \in \{ -1,0,+1 \}$.
Next {\bf Init-Adjacency-Matrices}
represents a $-1$ edge from $i$ to $j$ in $D^{(-1)}$ by placing $-1$ in $d^{-1}_{i,j}$.
Likewise $+1$ edges are represented in $D^{(+1)}$ by appropriate placement of $+1$ values.

\begin{definition}[$E_1$-length]
  {\sf
  Consider an LDG $G = (V,E_1)$ and an exact path $p$ in $G$.
    The {\em $E_1$-length} of $p$ is the number of $E_1$ edges in $p$.
  }
  \label{E1-length}
\end{definition}

Exact $0$ cost paths have even $E_1$-lengths.
Exact $\pm 1$ cost paths have odd $E_1$-lengths.

The algorithm in Figure~\ref{Min-Plus-first} may be made a little more efficient.
In the next section of this paper, we give a substantially more efficient solution building on this approach.

\begin{lemma}
{\sf
Consider an LDG $G = (\Sigma, V, E_1)$, where $\Sigma$ is semi-Dyck with $|\Sigma| = 2$.
Then at the termination of Figure~\ref{Min-Plus-first}'s algorithm,  $d_{i,j}^{w} = w$, for $w \in \{ -1, 0, +1 \}$, for all $\{ i,j \} \subseteq V$
where there is an exact $w$ cost path from $i$ to $j$.
}
\label{Min-Plus-First-Lemma}
\end{lemma}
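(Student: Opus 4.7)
The plan is induction on the $E_1$-length $m$ of an exact $w$-cost path from $i$ to $j$, with $w \in \{-1,0,+1\}$. The base case $m = 1$ is immediate: the initialization step {\bf Init-Adjacency-Matrices} enters every original $\pm 1$ edge of $E_1$ into the matrices $D^{(-1)}$ and $D^{(+1)}$; the case $m = 1$ with $w = 0$ cannot arise, since $D^{(0)}$ has no finite entries initially. For $m \geq 2$, I plan to split the path at an interior vertex into two strictly shorter exact paths whose costs still lie in $\{-1,0,+1\}$, invoke the inductive hypothesis on each piece, and observe that one of lines~7--9 of Figure~\ref{Min-Plus-first} is exactly the rule needed to fuse the two pieces into an entry of the appropriate $D^{(g)}$.

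The main obstacle is showing that a suitable split always exists. Let the prefix sums along the path be $s_0 = 0, s_1, \ldots, s_m = w$, a $\pm 1$ walk on the integers. I need an index $k$ with $1 \leq k \leq m-1$ such that $s_k \in \{-1,0,+1\}$ and $w - s_k \in \{-1,0,+1\}$. The admissible set of values for $s_k$ is $\{-1,0,+1\}$ when $w = 0$, $\{-1,0\}$ when $w = -1$, and $\{0,+1\}$ when $w = +1$. For $w = 0$, $s_1 \in \{-1,+1\}$ is always admissible so $k = 1$ works; note this yields the split pair $(\pm 1, \mp 1)$, which is exactly what line~7 tests. For $w = \pm 1$ either $s_1$ is admissible (use $k = 1$), or the walk starts on the wrong side of $0$ and, because $s_m = w$ is admissible, the $\pm 1$ walk must cross back into the admissible range; taking $k$ to be the first such crossing gives $k \leq m-1$, after a parity argument rules out $k = m$. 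In every case the split $(s_k,\, w - s_k)$ lies in one of the six pairs $\{(+1,-1),(-1,+1),(+1,0),(0,+1),(-1,0),(0,-1)\}$, and each such pair is caught by one of lines~7, 8, or~9.

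With the split in hand, the inductive hypothesis supplies $d^{s_k}_{i,v} = s_k$ and $d^{w-s_k}_{v,j} = w - s_k$ at the intermediate vertex $v$; the next pass of the triple loop whose running index $k$ equals $v$ then assigns $d^{w}_{i,j} = w$. What remains is the bookkeeping that the outer loop performs enough passes: a single pass of the triple loop with in-place updates realizes every pairwise composition presently available in the matrices, so after $\ell$ outer iterations every exact path representable by a splitting tree of depth $\leq \ell$ is recorded. This covers all exact paths once one appeals to the fact (implicit in~\cite{NU} and in Lemma~\ref{ZSC-Gives-Paths}) that shortest exact paths in the $M = 1$ regime have $E_1$-length polynomially bounded in $n$, so iterating $\ell$ up to $n$ suffices. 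I expect the split-existence claim to be the only delicate step; the rest is direct verification against lines~7--9.
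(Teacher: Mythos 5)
Your core argument is sound and lands in the same place as the paper's proof, but it is organized differently. The paper runs a complete induction on the outer iteration counter $\ell$ of Figure~\ref{Min-Plus-first}, tracking parity: exact $0$ paths of even $E_1$-length $t$ are credited to line~7 at iteration $t$, and exact $\pm 1$ paths of odd $E_1$-length to lines~8--9, with the decomposition of a long exact path into two shorter exact paths of costs matching lines~7--9 used implicitly rather than argued. You instead induct on the $E_1$-length of the path and supply precisely that decomposition step explicitly: the prefix-sum/first-crossing argument showing a split vertex always exists, with split costs among the six pairs caught by lines~7--9 and never $(0,0)$ (which line~7 indeed does not test), is correct and makes the combinatorial heart of the lemma more transparent than the paper does. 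The trade-off is that you must add the bookkeeping claim that each pass of the triple loop realizes at least one further level of composition, so that $\ell$ outer passes cover splitting trees of depth at most $\ell$; the paper gets this for free by inducting on $\ell$ directly.

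One caveat: your closing appeal to exact paths having $E_1$-length ``polynomially bounded in $n$'' is too weak for your own accounting. Your splits (peel an edge at $k=1$, or cut at the first admissible crossing) yield splitting trees whose depth is linear in the path's $E_1$-length, while the outer loop of Figure~\ref{Min-Plus-first} performs only $n-1$ passes; a bound like $n^2$ on path length would therefore not be covered by your argument as stated. What you actually need---and what the paper assumes implicitly as well, since its induction also stops at iteration $n$---is that whenever an exact $w$ cost path exists between $i$ and $j$, one exists of $E_1$-length at most roughly $n$. With that assumption made explicit (or with a sharper bound on the depth of your splitting trees), your proof goes through.
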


\begin{proof}
This is shown by complete induction on the iteration $\ell$ for exact~$-1, 0$ and $+1$ paths.\\

\noindent
{\bf Basis} 
Immediately after the initial iteration $\ell =2$, all $0$ exact paths of even $E_1$-length at least $2$ are found by line~7.
Such exact $0$ paths are created by combining adjoining $+1$ and $-1$ edges or combining adjoining $-1$ and $+1$ edges.
An exact $0$ path from $i$ to $j$ is recorded in the matrix $D^{(0)}$ by setting $d^{0}_{i,j}$ to $0$.

In iteration $\ell =3$ the exact $0$ cost paths from iteration $\ell =2$ are combined with adjoining $\pm 1$ edges from $E_1$ giving $\pm 1$ exact paths.
These exact $\pm 1$ paths have odd $E_1$-length of at least~3.
This is done by lines~8 and~9 and the matrices $D^{(\pm 1)}$ record these exact paths.\\

\noindent
{\bf Inductive Hypothesis} For $\ell =2$ and $\ell = 3$, the next cases hold.

 Immediately after iteration $\ell: \ell \geq t \geq 1$, for all even $t$, this algorithm finds all exact $0$
cost paths in line~7. 
By assumption, for all even $t$, these new exact $0$ paths discovered in iteration $t$ have even $E_1$-length of at least $t$.
Line~7 combines adjoining $\pm 1$ and $\mp 1$ exact paths from previous iterations and records these paths in $D^{(0)}$.
These new exact $0$ paths are recorded in $D^{(0)}$.

After iteration $\ell: \ell \geq t \geq 1$, for all odd $t$, this algorithm finds all exact $\pm 1$
cost paths in lines~8 and~9. 
By assumption, for all odd $t$, these new exact $\pm 1$ paths discovered in iteration $t$ have odd $E_1$-length of at least $t$.
Lines~8 and~9 combines exact $0$ paths and exact $\pm 1$ paths from previous iterations.
These new exact $\pm 1$ paths are recorded in $D^{(\pm 1)}$.\\

\noindent
{\bf Inductive Step} Consider the algorithm immediately after iteration $\ell+1$ where $\ell +1$ is even.
By the inductive hypothesis, consider all odd $t: \ell \geq t \geq 1$, the matrices $D^{(\pm 1)}$ contain $\pm 1$ 
exact paths of odd $E_1$-length.
Also, all $\mp 1$ exact paths in $D^{(\mp 1)}$ are of odd $E_1$-length.
In iteration $\ell +1$, this algorithm combines adjoining $\pm 1$ and $\mp 1$ exact paths to form $0$ exact
paths of even $E_1$-length.
Suppose an exact $0$ cost path $p$ is discovered in iteration $\ell+1$ where $p$ is of $E_1$-length $t \leq \ell-1$.
This cannot be the case since by the inductive hypothesis $p$ would have been discovered in iteration $t$.

Consider the algorithm immediately after iteration $\ell+1$ where $\ell +1$ is odd.
By the inductive hypothesis for all even $t: \ell \geq t \geq 1$, the matrices $D^{(0)}$ contain 
exact $0$ paths of even $E_1$-length.
Likewise, for odd $t: \ell \geq t \geq 1$, the matrices $D^{(\pm 1)}$ contain $\pm 1$ exact $\pm 1$ paths of 
odd $E_1$-length.
In this case, the algorithm combines adjoining $\pm 1$ ($0$) exact paths with $0$ ($\pm 1$) exact paths giving new 
exact $\pm 1$ cost paths of odd $E_1$-length.
Suppose an exact $\pm 1$ cost path $p$ is discovered in iteration $\ell+1$ where $p$ is of $E_1$-length $t \leq \ell-1$.
This cannot be the case since by the inductive hypothesis $p$ would have been discovered by iteration $t$.
\end{proof}

\begin{lemma}
{\sf
Consider an LDG $G = (\Sigma, V, E_1)$, where $\Sigma$ is semi-Dyck with $|\Sigma| = 2$, and the algorithm in Figure~\ref{Min-Plus-first}.
At the termination of the algorithm, if $d_{i,j}^{w} = w \in \{ -1, 0, +1 \}$, 
for any $\{ i,j \} \subseteq V$ where there is a $w$ cost exact path from $i$ to $j$, then
the algorithm computed the sign-closure $\mbox{\bf unsign}(G)$.
}
\label{Special-Prod-Lemma}
\end{lemma}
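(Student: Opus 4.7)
The plan is to prove a two-way set equality between the edge set produced at termination by the algorithm in Figure~\ref{Min-Plus-first} and $E(\mbox{\bf unsign}(G))$, handling each inclusion separately.

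For the forward inclusion (algorithm edges are sign-closure edges), I would argue by induction on the iteration counter $\ell$. The base case is the initial edge set $E_1$, which is exactly the starting edge set of $\mbox{\bf unsign}(G)$. For the inductive step, I would examine each of lines~7, 8, and 9 in turn. Line~7 writes a $0$ into $d_{i,j}^{0}$ only when there exist intermediate $k$ with adjacent entries summing to $0$, i.e., one of cost $+1$ and one of cost $-1$; since $\mbox{\bf sgn}(+1) \neq \mbox{\bf sgn}(-1)$, this matches the sign-closure rule in Definition~\ref{Def:Closure}. Lines~8 and~9 combine a $\pm 1$ edge with a $0$ edge, and since $\mbox{\bf sgn}(0) = 0 \neq \mbox{\bf sgn}(\pm 1)$, this is also a legitimate sign-closure combination. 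The inductive hypothesis supplies that the two adjacent edges being combined were themselves placed by sign-closure operations, so their combination lies in $E(\mbox{\bf unsign}(G))$.

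For the backward inclusion (sign-closure edges are algorithm edges), I would invoke Lemma~\ref{Min-Plus-First-Lemma}, which guarantees that after termination, the matrices $D^{(-1)}, D^{(0)}, D^{(+1)}$ record every pair $(i,j)$ admitting an exact path of each cost $w \in \{-1,0,+1\}$. It then suffices to show that every edge $i \stackrel{w}{\longrightarrow} j$ in $E(\mbox{\bf unsign}(G))$ witnesses an exact $w$-cost path from $i$ to $j$ in $G$. This follows by induction on the sign-closure derivation: the initial edges in $E_1$ are already exact paths of $E_1$-length $1$, and any combined edge $i \stackrel{v+w}{\longrightarrow} j$ formed from $i \stackrel{v}{\longrightarrow} k \stackrel{w}{\longrightarrow} j$ concatenates two exact paths (by induction) into an exact path of cost $v+w$. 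Lemma~\ref{ZSC-Gives-Paths} ensures that throughout this process, edge costs stay in $\{-1,0,+1\}$, so each combined edge is also one the algorithm records.

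The main obstacle will be the backward direction, specifically the claim that every sign-closure edge corresponds to an exact path in $G$. The subtle point is that once the sign-closure introduces $0$ edges, later applications of the rule combine these $0$ edges with previously generated edges, yielding longer and longer compositions; one must verify that the induction on sign-closure derivation depth interacts cleanly with path concatenation and that the cost sums remain in $\{-1,0,+1\}$ so Lemma~\ref{Min-Plus-First-Lemma} applies. Once both inclusions are established, they together give $E \;=\; E(\mbox{\bf unsign}(G))$ at termination, proving the algorithm computes the sign-closure.
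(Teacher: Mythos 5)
Your proposal is correct, and its essential engine is the same as the paper's: the completeness direction rests on Lemma~\ref{Min-Plus-First-Lemma}. But the overall structure differs. The paper gives only a short contradiction argument for that one direction: if some application of the sign-closure rule would add an edge $i \stackrel{v+w}{\longrightarrow} j$ not yet accounted for, then since ${\bf sgn}(v) \neq {\bf sgn}(w)$ forces $v+w \in \{-1,0,+1\}$, that edge witnesses an exact $\{-1,0,+1\}$-cost path, which Lemma~\ref{Min-Plus-First-Lemma} guarantees the algorithm already recorded --- contradiction. That is exactly your backward inclusion in contrapositive form, except the paper simply asserts that a closure edge is an exact path, whereas you justify it properly by induction on the closure derivation (initial $E_1$ edges are length-one exact paths; a rule application concatenates two exact paths of opposite-sign cost). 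Your forward inclusion --- that every entry the algorithm writes via lines~7--9 corresponds to a legitimate sign-closure combination, since $+1/-1$ and $\pm 1/0$ pairs always have differing signs --- is not argued in the paper at all; it is left implicit, even though "computed the sign-closure" really asks for both containments. So your two-way set-equality decomposition is more thorough, at the cost of some length; the one point both treatments handle informally is that concatenating exact paths may repeat vertices, which the paper's conventions (no repeated edges in cycles, paths need not be simple) tolerate.
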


\begin{proof}
At the termination of the algorithm $d_{i,j}^{w} = w$, where $w \in \{ -1, 0, +1 \}$, for all $\{ i,j \} \subseteq V$ where there is an exact $w$ cost path from $i$ to $j$.
If $\mbox{\bf unsign}(G)$ is not complete, then some edge from 
$i$ to $j$ must not have been placed in $E({\bf unsign}(G))$, by applying the sign-closure rule

\begin{tabbing}
text\= text \= text \= text \= text \= text \kill
\> {\bf if} $i \stackrel{v}{\longrightarrow} k \stackrel{w}{\longrightarrow} j \in E({\bf unsign}(G))$ and ${\bf sgn}(v) \neq {\bf sgn}(w)$\\
\> {\bf then} put $i \stackrel{v+w}{\longrightarrow} j$ in $E({\bf unsign}(G))$.
\end{tabbing}

Since $v \neq w$, 
then there must be some path from $i$ to $j$ that was not generated by the algorithm
in Figure~\ref{Min-Plus-first}.

But, $i \stackrel{v+w}{\longrightarrow} j$ is an exact $v+w \in \{ -1, 0, +1 \}$ path from $i$ to $j$.
This exact path must have been found by Lemma~\ref{Min-Plus-First-Lemma}, completing the proof.
\end{proof}

Nyk\"anen and Ukkonen's sign-closure algorithm~\cite{NU} finds exact paths in graphs in $O(n^3)$ time.
The next result shows how to find Dyck exact $-1, 0,$ and $+1$ paths.
This is done by dropping the condition $d_{i,k}^{-1}+ d_{k,j}^{+1}$ from the calculation of $d_{i,j}^{0}$
in Figure~\ref{mainEQs}.

\begin{lemma}
{\sf
Consider an LDG $G = (\Sigma, V, E_1)$, where $\Sigma$ is Dyck with $|\Sigma| = 2$, and the algorithm in Figure~\ref{Min-Plus-first}
while dropping the expression $(d_{i,k}^{-1}+ d_{k,j}^{+1}) = 0$ in line~7.
At the termination of the algorithm, if $d_{i,j}^{w} = w \in \{ -1, 0, +1 \}$, for any $\{ i,j \} \subseteq V$
where there is a $w$ cost Dyck path from $i$ to $j$, then the algorithm 
computed the Dyck sign-closure $\mbox{\bf unsign}^{\geq}(G)$.
}
\label{inefficient-Dyck}
\end{lemma}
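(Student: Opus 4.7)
The plan is to follow the template of Lemma~\ref{Min-Plus-First-Lemma} and Lemma~\ref{Special-Prod-Lemma}, adapted to the Dyck setting. The critical observation is that the Dyck sign-closure rule requires $v\neq -1$, which forbids the combination $(v,w)=(-1,+1)$; dropping the expression $(d_{i,k}^{-1}+d_{k,j}^{+1})=0$ from line~7 is exactly the removal of this forbidden combination from the algorithm's $0$-closure step. The surviving combinations---$(+1,-1)$ in the modified line~7, $(+1,0)$ and $(0,+1)$ in line~8, and $(0,-1)$ in line~9---are all legitimate Dyck sign-closure rules. Line~9 still contains the non-Dyck combination $(-1,0)\to -1$, but this only over-approximates $D^{(-1)}$ and does not block any Dyck derivation.

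First I would establish, by induction on the iteration $\ell$ following Lemma~\ref{Min-Plus-First-Lemma}, that whenever $(i,j,w)\in\mbox{\bf unsign}^{\geq}(G)$ we have $d_{i,j}^{w}=w$ at termination. The base case (iteration~$2$) covers Dyck $0$-paths of $E_1$-length~$2$---the pattern $a\,a^{-1}$---which the modified line~7 places in $D^{(0)}$. For the inductive step, I would observe that every edge of $\mbox{\bf unsign}^{\geq}(G)$ beyond $E_1$ is produced by at least one application of the Dyck closure rule, hence decomposes as $v+w$ with $v\neq -1$ and ${\bf sgn}(v)\neq{\bf sgn}(w)$; the inductive hypothesis places both component edges in their respective matrices, and the surviving algorithm line (modified~7, 8, or 9) then records the combined edge.

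Mirroring the contradiction argument in Lemma~\ref{Special-Prod-Lemma}, I would then assume some edge $i\stackrel{v+w}{\longrightarrow}j$ of $\mbox{\bf unsign}^{\geq}(G)$ is missing from the algorithm's output. The Dyck rule furnishes a decomposition $i\stackrel{v}{\longrightarrow}k\stackrel{w}{\longrightarrow}j$ with $v\neq -1$ and ${\bf sgn}(v)\neq{\bf sgn}(w)$; the induction forces $d_{i,k}^{v}=v$ and $d_{k,j}^{w}=w$, whereupon the corresponding line of the modified algorithm fires and adds the missing edge---a contradiction.

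The main obstacle is the case analysis in the inductive step: I must verify that deleting $(-1,+1)$ from line~7 does not eliminate any combination a Dyck derivation requires. Because the Dyck rule's $v\neq -1$ clause forbids exactly this combination, the correspondence between the surviving lines and the Dyck closure rules is direct. A secondary subtlety is that the extra $(-1,0)\to -1$ production in line~9 could in principle create $-1$-edges outside $\mbox{\bf unsign}^{\geq}(G)$, but any such edge is only ever consumed by the modified line~7 when prefixed by a $+1$-edge, and the resulting path then begins with the balanced prefix $+1,-1$ followed by a Dyck $0$-path, hence has non-negative prefix sums and is genuinely Dyck-valid.
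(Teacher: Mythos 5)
Your proposal is correct and follows essentially the same route as the paper: it extends the induction of Lemma~\ref{Min-Plus-First-Lemma} and reuses the contradiction argument of Lemma~\ref{Special-Prod-Lemma}, with the dropped term $(d_{i,k}^{-1}+d_{k,j}^{+1})$ identified exactly with the $v \neq -1$ clause of the Dyck sign-closure rule. Your handling of the leftover $(-1,0)$ combination in line~9 is actually more explicit than the paper's brief remark that computing $d_{i,j}^{-1}$ ``just extends $-1$ edges'' (the paper frames soundness via non-negative prefix sums rather than your consumption argument), but the substance of the two arguments is the same.
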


\begin{proof}
Lemma~\ref{Min-Plus-First-Lemma} shows this algorithm finds all $-1,0,$ and $+1$ exact paths for any semi-Dyck LDG $G$.
With this in mind, it remains to extend that lemma.
The next arguments allows the extension of Lemma~\ref{Min-Plus-First-Lemma}'s induction proof to this Dyck case.

Lines 7, 8, and~9 in Figure~\ref{Min-Plus-first}, only computing $d_{i,j}^{0}$ may create a negative prefix sum for a $0$ cost path or subpath.
Clearly computing $d_{i,j}^{+1}$ cannot have a negative prefix sum.
Likewise, computing $d_{i,j}^{-1}$ can't compute a negative prefix sum for a $0$ cost path or subpath.
Computing $d_{i,j}^{-1}$ just extends $-1$ edges, but does not necessarily contribute to non-Dyck labeled paths.

Removing the condition $(d_{i,k}^{-1}+ d_{k,j}^{+1}) = 0$ in line~7 in the equation for $d_{i,j}^{w}$ finds 
all exact $w \in \{ -1, 0, +1 \}$ cost paths without negative prefix sums. 
An identical argument as in the proof of Lemma~\ref{Special-Prod-Lemma} indicates
all edges of $\mbox{\bf unsign}^{\geq}(G)$ have been found.
Thus, this modification computes the Dyck sign-closure.
\end{proof}

Intuitively, our approach to improving the algorithm in Figure~\ref{Min-Plus-first} is anchored in Boolean matrix multiplication for transitive closure.
However, starting with graph edges $\{ -1, +1 \}$ and computing with the edge weights $\{ -1, 0, +1 \}$ seems to 
preclude Boolean matrix multiplication.
Thus, we leverage Alon, Galil, and Margalit~\cite{AGM}.

\subsection{AGMY matrix encoding}
\label{AGMY-encoding}
Alon, Galil, and Margalit~\cite{AGM} as well as Yuval~\cite{Yuval} supply the basis of our {\em (AGMY)} algebraic matrix coding.
These AGMY style codings have been very fruitful, see for example~\cite{Zwick:2002, SZ, ROMANI1980134, Takaoka1998,GM}.

Lemma~\ref{neighborBounds} gives insight into an algebraic matrix product solution.
In particular, the AGMY representation uses powers of $3(n+1)$ to differentiate $\{ -1, 0, +1 \}$ edge weights.
That is, $\frac{1}{3(n+1)}, (3(n+1))^{0}, 3(n+1)$ represent $-1, 0, +1$ edges, respectively.
These AGMY values are sufficiently separated to allow information to be gleaned after an algebraic matrix product.

Figure~\ref{coding} shows how to translate adjacency matrices $D^{(-1)}, D^{(0)}$ and $D^{(+1)}$ to an AGMY encoded
adjacency matrix.
The restriction $g \neq h$ is from the sign-closure in Definition~\ref{Def:Closure}.

\begin{figure}[ht]
\begin{center}
  \leavevmode
\framebox[1.00\width][l]{\vbox{
\begin{eqnarray*}
c_{i,j} & \leftarrow & 
\left\{
\begin{array}{ll}
\displaystyle \sum_{\stackrel{k=1}{g \neq h}}^{n} (3(n+1))^{d_{i,k}^{g}+d_{k,j}^{h}} & \mbox{ if } d_{i,k}^{g} \neq \infty \wedge d_{k,j}^{h} \neq \infty\\[1cm]
0 		                                                 & \mbox{ if }  d_{i,k}^{g} = \infty \vee d_{k,j}^{h} = \infty.\\
\end{array}
\right.
\end{eqnarray*}
}
}
\end{center}
\caption{AGMY matrix coding for algebraic matrix multiplication to simuate one matrix dot product based on Alon, Galil, and Margalit~\cite{AGM}; and Yuval~\cite{Yuval}}
\label{coding}
\end{figure}

An algebraic matrix product computes the expression in Figure~\ref{coding}
for all $i,j: n \geq i,j \geq 1$, see Figure~\ref{mainEQs}.
If $d_{i,k}^{g} = \infty$ or $d_{k,j}^{h} = \infty$, then replace $(3(n+1))^{d_{i,k}^{g}+d_{k,j}^{h}}$ with $0$.
This works since no finite power of $3(n+1)$ is~$0$.

\begin{figure}[ht]
\framebox[1.00\width][l]{\vbox{
\vspace{0.1in}
For all $i,j: n \geq i,j \geq 1$, let
\begin{eqnarray*}
c_{i,j}^{-1} & \leftarrow &  (3(n+1))^{d_{i,k}^{0}+d_{k,j}^{-1}} + (3(n+1))^{d_{i,k}^{-1}+d_{k,j}^{0}}\\
c_{i,j}^{0} & \leftarrow &  (3(n+1))^{d_{i,k}^{-1}+d_{k,j}^{+1}} + (3(n+1))^{d_{i,k}^{+1}+d_{k,j}^{-1}}\\
c_{i,j}^{+1} & \leftarrow &  (3(n+1))^{d_{i,k}^{0}+d_{k,j}^{+1}} + (3(n+1))^{d_{i,k}^{+1}+d_{k,j}^{0}}
\end{eqnarray*}
}
}
\caption{A breakout of computing AGMY matrix values}
\label{mainEQs}
\end{figure}

Initially, an adjacency matrix represents an LDG $G$ with edge costs from $\{-1, +1\}$.
So at the start of the algorithm, any two vertices $i$ and $j$ may share a $-1$ and $+1$ edge going in each direction.
Thus, each element of the initial AGMY coded adjacency matrices starts with values from,
\begin{eqnarray*}
\{ \, 0,  \, (3(n+1))^{1},  \, (3(n+1))^{-1}, \, (3(n+1))^{1} + (3(n+1))^{-1} \, \}.
\end{eqnarray*}
Here the AGMY $0$ represents no edge.

During the first matrix product, $0$ cost edges may appear.
They are represented by $(3(n+1))^0 = 1$.

The ideas for the next Lemma are based on Alon, Galil, and Margalit~\cite{AGM}.

\begin{lemma}
  {\sf
    Given two $n \times n$ AGMY encoded LDG adjacency matrices $S$ and $T$ representing values from $\{-1,0,+1\}$.
    Consider an algebraic matrix product $P = S \, T$ and say there is a path of cost $p_{i,j}$ between $i$ and $j$, then
    \begin{eqnarray*}
    n \left[ (3(n+1))^2 + 2(3(n+1)) + 3 + \frac{2}{3(n+1)} + \frac{1}{(3(n+1))^2} \right] & \geq &  p_{i,j}.
    \end{eqnarray*}
}
\label{mm-bound}
\end{lemma}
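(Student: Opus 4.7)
The plan is to bound the magnitude of a single entry of an AGMY-encoded adjacency matrix, square that bound to dominate each summand in the algebraic dot product, then multiply by $n$ for the number of summands.

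First I would observe that, by Lemma~\ref{neighborBounds} combined with the encoding in Figure~\ref{coding}, each entry $s_{i,k}$ of an $n\times n$ AGMY-encoded adjacency matrix over costs $\{-1,0,+1\}$ contributes at most one additive term per distinct cost value, so
\[
s_{i,k} \;\leq\; (3(n+1))^{+1} + (3(n+1))^{0} + (3(n+1))^{-1} \;=\; 3(n+1) + 1 + \frac{1}{3(n+1)}.
\]
The identical bound applies to every entry $t_{k,j}$ of $T$.

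Next, since the algebraic product $P = S\,T$ has entries $p_{i,j} = \sum_{k=1}^{n} s_{i,k}\,t_{k,j}$, each of the $n$ summands is bounded above by $\bigl(3(n+1) + 1 + \frac{1}{3(n+1)}\bigr)^{2}$. Expanding this square with $a = 3(n+1)$, $b = 1$, and $c = \frac{1}{3(n+1)}$ via $(a+b+c)^{2} = a^{2}+b^{2}+c^{2}+2ab+2ac+2bc$ gives
\[
(3(n+1))^{2} + 2(3(n+1)) + 3 + \frac{2}{3(n+1)} + \frac{1}{(3(n+1))^{2}},
\]
where the constant $3$ is assembled as $b^{2} + 2ac = 1 + 2$.

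Finally, summing this per-term bound over the $n$ values of $k$ yields the claimed inequality. The only potentially subtle step is the first --- justifying that an AGMY-encoded entry cannot exceed the sum of its three per-cost contributions --- and this is immediate from Lemma~\ref{neighborBounds}, which caps the outgoing edges per vertex pair at one per cost in $\{-1,0,+1\}$. The remainder is a straightforward arithmetic expansion; the main obstacle, such as it is, is simply bookkeeping on the cross terms so that the constant $3$ is correctly assembled from $b^{2} + 2ac$ and not from $b^{2}$ alone.
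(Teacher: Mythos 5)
Your proposal is correct and follows essentially the same route as the paper: bound each AGMY entry by $B = 3(n+1) + 1 + \frac{1}{3(n+1)}$ (via Lemma~\ref{neighborBounds}, at most one edge per cost in $\{-1,0,+1\}$ between a vertex pair), bound each dot-product summand by $B^{2}$, and multiply by the $n$ summands. Your explicit expansion of $(a+b+c)^{2}$, including assembling the constant $3$ as $b^{2}+2ac$, just makes explicit the inequality $B^{2}$ versus the stated expression that the paper asserts without showing the algebra.
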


\begin{proof}
While a sign-closure is computed, any two vertices $i$ and $j$ may share up to three edges going in each direction by Lemma~\ref{neighborBounds}.
In AGMY coding, three outgoing edges are bounded by,
\begin{eqnarray*}
B & \leq & \frac{1}{3(n+1)} + 1 + 3(n+1).
\end{eqnarray*}

Thus, a single algebraic matrix product produces a new matrix element of at most
\begin{eqnarray}
(3(n+1))^2 + 2(3(n+1)) + 3 + \frac{2}{3(n+1)} + \frac{1}{(3(n+1))^2} & \geq & B^2.
\label{UBound}
\end{eqnarray}

The dot-product of row $S[i,*]$ and column $T[*,j]$ gives a value of at most,
\begin{eqnarray*}
\underbrace{B^2 + \cdots + B^2}_{\mbox{the sum of $n$ squares}}
\end{eqnarray*}
and since there are at most $n$ of these $B^2$ terms, the result holds.
\end{proof}
\vspace{0.25in}

Following the AGMY adjacency matrices of Lemma~\ref{mm-bound},
Say there is a path from $i$ to $j$, then the dot-product of row $S[i,*]$ and column $T[*,j]$ is at least AGMY $\frac{1}{3(n+1)}$.
This is the result of combining adjoining $-1$ and $0$ edges.
This is because a $-1$ cost edge is represented by AGMY $\frac{1}{3(n+1)}$
and a $0$ edge is represented by AGMY $(3(n+1))^0 = 1$.

Factors besides $\{ \frac{1}{3(n+1)}, 1, 3(n+1) \}$ in Lemma~\ref{mm-bound} are removed after each matrix product.
These factors represent unnecessary intermediary paths and their growth makes the algorithm too expensive.
So, following Alon, et al.~\cite{AGM}, see also Zwick~\cite{Zwick:2002}, our algorithm removes all edges except
for AGMY $\{ \frac{1}{3(n+1)}, 1, 3(n+1) \}$.
This is normalization. Normalization removes unnecessary edges for computing the EPL for $0$ cost paths.
So, immediately after the matrix product in line~5, the adjacency elements are converted back to values from $\{ \frac{1}{3(n+1)}, 1, 3(n+1) \}$.

The next corollary follows from the upper bound on the representation of each adjacency element from Lemma~\ref{mm-bound}.
Very similar results are in~\cite{AGM,Zwick:2002}.

\begin{corollary}
{\sf
In a single AGMY algebraic matrix multiplication of an LDG's adjacency matrix,
each of the resulting matrix elements may be represented in $O(\log n)$ bits.
}
\end{corollary}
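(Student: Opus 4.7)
The plan is to read off the representation complexity directly from the explicit bound supplied by Lemma~\ref{mm-bound}. First I would note that after a single algebraic matrix product, every entry $p_{i,j}$ arising in the AGMY-encoded product matrix is a nonnegative rational whose only possible ``denominators'' are powers of $3(n+1)$ ranging from $(3(n+1))^{-2}$ up to $(3(n+1))^{2}$; this is because each input entry is a sum of terms $(3(n+1))^t$ with $t\in\{-1,0,+1\}$, and a dot product multiplies two such terms and sums at most $n$ of them.

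Next I would clear denominators: multiply the bound from Lemma~\ref{mm-bound} through by $(3(n+1))^2$. This converts each entry $p_{i,j}$ into a nonnegative integer
\[
\widetilde{p}_{i,j} \ = \ (3(n+1))^2 \, p_{i,j},
\]
and Lemma~\ref{mm-bound} immediately gives
\[
\widetilde{p}_{i,j} \ \le \ n \Bigl[ (3(n+1))^4 + 2(3(n+1))^3 + 3(3(n+1))^2 + 2(3(n+1)) + 1 \Bigr] \ = \ O(n^5).
\]
Therefore $\widetilde{p}_{i,j}$ fits in $O(\log n)$ bits, and since the fixed scaling factor $(3(n+1))^2$ also has $O(\log n)$-bit representation, the original entry $p_{i,j}$ (stored, say, as the pair numerator and denominator, or equivalently as a fixed-point number with the denominator implicit) uses $O(\log n)$ bits as well.

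I do not expect a real obstacle here, as the corollary is essentially a bookkeeping consequence of Lemma~\ref{mm-bound}. The one subtlety worth flagging is that AGMY encoding mixes positive and negative powers of $3(n+1)$, so a naive ``integer'' interpretation would be wrong; the rescaling step above resolves this uniformly. Once normalization (described in the paragraphs preceding the corollary) is applied after each product, the entries are forced back into $\{1/(3(n+1)),\,1,\,3(n+1)\}$, so this $O(\log n)$ bit bound is preserved across all iterations of the main algorithm, which is why the corollary is the right quantity to record at this point.
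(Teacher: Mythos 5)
Your proposal is correct and follows the same route as the paper: the corollary is obtained directly from the upper bound of Lemma~\ref{mm-bound}, which shows each product entry is polynomially bounded in $n$ (your explicit rescaling by $(3(n+1))^2$ to clear the bounded negative powers just makes the bookkeeping precise), hence representable in $O(\log n)$ bits. No gap; the paper itself records this as an immediate consequence of that lemma.
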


New $\pm 1$ edges are generated by matrix products and normalization or extending $\pm 1$ by $0$ cost
edges.
Furthermore, in computing Dyck paths, any edge $i \stackrel{-1}{\longrightarrow} k$ edge may {\em not} 
be joined to $k \stackrel{+1}{\longrightarrow} j$. 
This is because 

$$i \stackrel{-1}{\longrightarrow} k \stackrel{+1}{\longrightarrow} j$$ 

\noindent
is not Dyck.

In the case when a $-1$ edge is made from multiple $E_1$ edges, then this edge represents a path. 
Therefore, this path has a negative sum. In fact, it is a negative prefix sum. 
Such a $-1$ edge is already not Dyck.
Thus, it may not start a new Dyck path, though it may follow a $+1$ edge.

\begin{figure}[ht]
\begin{center}
  \leavevmode
\framebox[1.05\width][l]{\vbox{
\begin{tabbing}
text \= text \= text \= text \= text \= text \kill
\> 1. {\em detectNegativeOneEdge}$(\mbox{edge\_cost}, n)$\\
\> 2. \> $\mbox{check} \leftarrow 3(n+1) \times \mbox{fractional\_part}(\mbox{edge\_cost})$\\
\> 3. \> {\bf if } $2n \geq \mbox{check} \geq 1$ {\bf then return} True\\
\> 4. \> {\bf else return} False\\
\> \\
\> 1. {\em detectPositiveOneEdge}$(\mbox{edge\_cost}, n)$\\
\> 2. \> $\mbox{check} \leftarrow \mbox{truncate}(\mbox{edge\_cost}/ 3(n+1))$\\
\> 3. \> {\bf if } $2n \geq \mbox{check} \geq 1$ {\bf then return} True\\
\> 4. \> {\bf else return} False\\
\> \\
\> 1. {\em detectZeroEdge}$(\mbox{edge\_cost}, n)$\\
\> 2. \> $\mbox{check} \leftarrow \mbox{truncate}(\mbox{edge\_cost}) \, {\bf mod} \, 3(n+1)$\\
\> 3. \> {\bf if } $3n \geq \mbox{check} > 0$ {\bf then return} True\\
\> 4. \> {\bf else return} False
\end{tabbing}
}}
\end{center}
\vspace{-0.125in}
\caption{Functions used to detect AGMY costs representing $\{ -1, 0, +1 \}$ for normalize an AGMY algebraic matrix product}
\label{afterRD}
\end{figure}

Normalization uses the functions in Figure~\ref{afterRD}.
The upper and lower bounds in each function in Figure~\ref{afterRD} are determined as follows.
Lemma~\ref{mm-bound} gives upper bounds for $\mbox{\em detectNegativeOneEdge}(\mbox{edge\_cost}, n)$.
That is, the first operation of $\mbox{\em detectNegativeOneEdge}$ is to multiply the fractional part of its edge-weight by $3(n+1)$ giving the upper-bound
\begin{eqnarray*}
2n  & = & 3(n+1) \left(\frac{2n}{3(n+1)} \right).
\end{eqnarray*}

An AGMY $\frac{2}{3(n+1)}$ term in Equation~\ref{UBound} indicates there are at most two different ways to form a $-1$ edge edge through a single intermediary vertex.
For example, take paths from a vertex~$i$ to another vertex~$j$ with intermediary $k$.
That is, the two ways are: $i \stackrel{-1}{\longrightarrow} k \stackrel{0}{\longrightarrow} j$ or $i \stackrel{0}{\longrightarrow} k \stackrel{-1}{\longrightarrow} j$.
Equation~\ref{UBound} also indicates there is at most one cost $-2$ edge between any two vertices due to the $\frac{1}{(3(n+1))^2}$ term.
A potential $-2$ cost edge will not be detected by $\mbox{\em detectNegativeOneEdge}$
because a single $-1$ edge that has AGMY cost of at least $\frac{1}{3(n+1)}$.
Thus, the first operation of $\mbox{\em detectNegativeOneEdge}$ is to multiply the fractional part of the AGMY edge-weight by $3(n+1)$
so the smallest value of a $-1$ cost edge is $1$.
The upper and lower bounds in $\mbox{\em detectPositiveOneEdge}$ and $\mbox{\em detectZeroEdge}$ are similar.

Thus, reset all elements immediately after each recursive doubling step using the functions in Figure~\ref{afterRD}.
The three functions in Figure~\ref{afterRD} detect $-1,0$ and $+1$ cost edges
following each algebraic AGMY matrix multiplication.
These AGMY edges have values as large and complex as those in Lemma~\ref{mm-bound}.
After detecting $-1, 0$ or $+1$ AGMY edges, more complex pathways are simplified or normalized by replacing them by the appropriate members of $\{ \frac{1}{3(n+1)}, 1, 3(n+1) \}$.
Each normalization costs~$O(n^2 \log n)$.

\begin{figure}[!t]
\begin{center}
  \leavevmode
\framebox[1.05\width][l]{\vbox{
\begin{tabbing}
a \= text \= text \= text \= text \= text \kill
\> {\bf Digraph-flat-exact-paths}($G$)\\
\> // $G = (\Sigma, V,E_1)$ is an LDG, $\Sigma$ is semi-Dyck and $|\Sigma| = 2$\\
\> 1. $\{ D^{(-1)}, D^{(0)}, D^{(+1)} \} \leftarrow$ Init-Adjacency-Matrices$(G)$\\
\> 2. $M \leftarrow \mbox{AGMY-Code-then-Sum}(D^{(-1)}, D^{(0)}, D^{(+1)})$\\
\> 3. $n \leftarrow |V|$\\
\> 4. {\bf for} $\ell \leftarrow 2$ {\bf to } $\lceil \log n \rceil + 1$ {\bf do}\\
\> 5. \> $M \leftarrow M \, M$ \ // AGMY, find new $0$ edges\\
\> 6. \> Remove $\pm 1$ edges from $M$\\
\> 7. \> $M \leftarrow \mbox{\em Normalize\_and\_Divide\_by\_2}(M)$\\
\> 8. \> $Z \leftarrow \mbox{Get-Zero-Edges}(M)$\\
\> 9. \> $M \leftarrow Z \, M \, Z$ \ // AGMY, extend $\pm 1$ and $0$ edges
\end{tabbing}
}}
\end{center}
\vspace{-0.125in}
\caption{A new matrix-based sign-closure algorithm for digraphs with initial edge costs representing $\{ \, -1, +1 \, \}$. This algorithm
uses AGMY coded matrices.}
\label{New-Min-Plus}
\end{figure}

Figure~\ref{New-Min-Plus} is the critical component of all our results.
In Figure~\ref{New-Min-Plus}, 
$\mbox{\em Normalize\_and\_Divide\_by\_2}$ removes redundant edges.
By Lemma~\ref{mm-bound}, in line~5 the algebraic AGMY matrix multiplication gives values as large as
\begin{eqnarray*}
n\left[ (3(n+1))^2 + 2(3(n+1)) + 3 + \frac{2}{3(n+1)} + \frac{1}{(3(n+1))^2} \right].
\end{eqnarray*}
Line~6 removes $\pm 1$ edges. In line~7, normalization changes $\pm 2$ cost edges to $\pm 1$ cost edges.
Thus, only retaining AGMY encodings for $\{ \frac{1}{3(n+1)}, 1, 3(n+1) \}$.
The idea of dividing the edge costs by~$2$ is from Alon, et al.~\cite{AGM}.

Line~9 joins adjacent $0$ cost edges and it extends $\pm 1$ cost edges with adjoining $0$ cost edges.
Line~9 cannot generate $\pm 2$ cost edges.
Before line~9 in iteration~$\ell$, these adjoining $0$ cost edges have
$E_1$-length from $2$ up to $3^{\ell -2} \cdot 2^{\ell -1}$.
So, at the end of line~9 in iteration~$\ell$, up to three consecutive adjoining $0$ cost edges may form a single $E_1$-length $3^{\ell-1} \cdot 2^{\ell-1}$ edge.

\begin{figure}[ht]
\begin{center}
  \leavevmode
\framebox[1.05\width][l]{\vbox{
\begin{tabbing}
a \= text \= text \= text \= text \= text \kill
\> 5'. \> $M' \leftarrow \mbox{\em Markup\_minus\_one\_edges}(M)$ \ // AGMY, mark $-1$ edges\\
\> 5''. \> $M \leftarrow M' \, M$ \ // AGMY product, non-Dyck $0$ edges are detectable
\end{tabbing}
}
}
\end{center}
\vspace{-0.125in}
\caption{Replacing line 5 of Algorithm~\ref{New-Min-Plus} with these lines is the basis of a sign-closure algorithm for Dyck digraphs with initial edge costs $\{ \, -1, +1 \, \}$. 
Further updates are done in $\mbox{\em Normalize\_and\_Divide\_by\_2}(M)$ so it deletes $0$ edges marked here as created by non-Dyck paths.}
\label{New-Min-Plus-Dyck}
\end{figure}

Figure~\ref{New-Min-Plus-Dyck} shows how to determine if a new $0$ edge is made by a $-1$ edge followed by a $+1$ edge.
A $-1$ edge followed by a $+1$ edge is not Dyck.
The function $\mbox{\em Markup\_minus\_one\_edges}$ adds $\left( \frac{1}{3(n+1)} \right)^4$ to each negative edge in its AGMY input matrix $M$ producing $M'$.
Now, a $-1$ cost edge followed by a $+1$ cost edge has an AGMY product of 
\begin{eqnarray*}
\left( \left( \frac{1}{3(n+1)} \right)^4 + \left( \frac{1}{3(n+1)} \right) \right) 3(n+1) & = &  \left( \frac{1}{3(n+1)} \right)^3 + 1
\end{eqnarray*}
and such cubic terms are found by a Dyck modified $\mbox{\em Normalize\_and\_Divide\_by\_2}(M)$ in the algorithm in Figure~\ref{New-Min-Plus}.
This cubic term only occurs when an AGMY $-1$ value is multiplied by an AGMY $+1$ value, when this product represents a $-1$ edge going to a $+1$ edge.
Of course, a $-1$ edge going to a $+1$ edge is not a Dyck path.
Also these cubic terms cannot become $-2$ AGMY edges in one AGMY matrix multiplication, by Lemma~\ref{mm-bound}.
Finally, for the Dyck case, the function $\mbox{\em Normalize\_and\_Divide\_by\_2}(M)$ in the algorithm in Figure~\ref{New-Min-Plus-Dyck} is updated to delete any $0$ 
cost edge made from a $-1$ edge followed by a $+1$ edge.

The next convention smooths the subsequent presentation.

\begin{convention}[{\bf Digraph-flat-exact-paths} context]
{\sf
The algorithm {\bf Digraph-flat-exact-paths} refers to Figure~\ref{New-Min-Plus} for the semi-Dyck case in addition to 
Figure~\ref{New-Min-Plus-Dyck} for the Dyck case, depending on the context.
}
\end{convention}

\subsection{Flat Dyck and semi-Dyck grammars}
\label{flat-grammars}

Flat grammars supply a foundation for the complete solution.

\begin{definition}[Flat Dyck and flat semi-Dyck grammars]
{\sf
The flat Dyck language is,
\begin{eqnarray*}
F & \Longrightarrow & T \ | \  P \\
T & \Longrightarrow & T \, T \ | \  P \ | \ \epsilon\\
P & \Longrightarrow & {\bf a} \, P \, {\bf a}^{-1}  \ | \ \epsilon.
\end{eqnarray*}
The flat semi-Dyck language replaces the last production with $P \ \Longrightarrow \ {\bf a} \, P \, {\bf a}^{-1} \ | \ {\bf a}^{-1} \, P \, {\bf a}  \ | \ \epsilon$.
}
\label{flatGrammars}
\end{definition}

\begin{definition}[Sign-closure graph evolution]
  {\sf
    Consider an LDG $G_1 = (\Sigma, V, E_1)$ and {\bf Digraph-flat-exact-paths}. 
    At the end of each iteration $\ell \geq 2$ this algorithm produces the LDG $G_{\ell} = (\Sigma \cup \{ 0 \}, V, E_{\ell})$.
  }
  \label{G-ell-graphs}
\end{definition}

A path in $G$ may become a single $\{ -1,0,+1 \}$ edge in $E_{\ell}$.
Such new edges are generated by {\bf Digraph-flat-exact-paths}.
The number of $+1$ and $-1$ edges from $E_1$ contributing to new edges give important insights.

\begin{definition}[$c_{+}$-length and $c_{-}$-length]
  {\sf
  Consider an LDG $G_{\ell} = (\Sigma, V, E_{\ell})$ and an edge $e \in E_{\ell}$, for $\ell \geq 1$.
  An edge $e$'s $c_{+}(e)$-length is  $e$'s number of $+1$ edges from $E_1$ and $c_{-}(e)$-length is $e$'s number of $-1$ edges from $E_1$.
  }
  \label{C-length}
\end{definition}

In iteration $\ell$ of {\bf Digraph-flat-exact-paths}, edges with $\pm 1$ costs are made by joining $\pm 1$ edges from
iteration $\ell-1$.
Also, additional exact $0$ cost paths may be included in the new edges.
After iteration $\ell =2$, new $\pm 1$ edges are not $\pm 1$ exact paths.

The next corollary is known in a number of contexts.

\begin{corollary}
{\sf
Consider an LDG $G_1 = (\Sigma, V, E_1)$ where $|\Sigma|= 2$, $\Sigma$ is Dyck (semi-Dyck) and
any $0$ cost edge $e$ in $G_{\ell}$ with $E_1$-length $|e|$, then $e$ has $\frac{|e|}{2}$ edges with $+1$ costs
and $\frac{|e|}{2}$ edges with $-1$ costs, all from $E_1$.
}
\label{balance}
\end{corollary}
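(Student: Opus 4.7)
The plan is to prove this by a simple additivity argument. Since $G_1$'s edges all carry cost $\pm 1$, any new edge $e$ constructed by \textbf{Digraph-flat-exact-paths} in $G_{\ell}$ corresponds to a concatenation of $E_1$ edges, and the cost of $e$ is just the sum of those constituent $E_1$ edge costs.

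First I would formally establish the key invariant: for every edge $e$ in $E_{\ell}$ (produced after any iteration $\ell \geq 1$), the label-cost of $e$ satisfies $\lc(e) = c_{+}(e) - c_{-}(e)$, and moreover $|e| = c_{+}(e) + c_{-}(e)$ where $|e|$ is the $E_1$-length of $e$ from Definition~\ref{E1-length}. This invariant holds in the base case $\ell = 1$ trivially since each $E_1$ edge has either $c_{+} = 1, c_{-} = 0$ or $c_{+} = 0, c_{-} = 1$, matching its cost in $\{-1,+1\}$. For the inductive step, every new edge created in iteration $\ell$ comes from concatenating existing edges (either through the AGMY product in line~5 or through the extension by $0$ cost edges in line~9 of Figure~\ref{New-Min-Plus}). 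Concatenation adds both the $c_{+}$-lengths and $c_{-}$-lengths, and also adds label-costs, so the invariant is preserved.

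Given the invariant, the corollary follows immediately. If $e$ is a $0$ cost edge in $G_{\ell}$, then $\lc(e) = 0$ gives $c_{+}(e) = c_{-}(e)$, and combined with $c_{+}(e) + c_{-}(e) = |e|$ we get $c_{+}(e) = c_{-}(e) = \frac{|e|}{2}$. The Dyck case needs no extra argument beyond the semi-Dyck case since a Dyck path is a special case of a semi-Dyck path by Lemma~\ref{DYCK} and Lemma~\ref{semi-DYCK}; the Dyck modifications in Figure~\ref{New-Min-Plus-Dyck} only prune non-Dyck zero-cost edges, never altering the $c_{+}$ or $c_{-}$ counts of surviving edges.

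The main subtlety, which is barely an obstacle, is making sure that normalization in $\mbox{\em Normalize\_and\_Divide\_by\_2}$ preserves the invariant. The halving rescales the AGMY exponents uniformly, so any surviving $\pm 1$ or $0$ cost edge still represents an actual concatenation of $E_1$ edges whose cost sum matches its normalized label-cost; no spurious edges are introduced because the normalization thresholds in Figure~\ref{afterRD} only detect genuine $\pm 1$ and $0$ cost representatives as guaranteed by Lemma~\ref{mm-bound}. Once that is noted, the count equality $c_{+}(e) = c_{-}(e) = |e|/2$ is just arithmetic.
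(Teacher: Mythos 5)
The paper gives no proof of this corollary (it is stated as ``known in a number of contexts''), so your attempt has to stand on its own, and its final arithmetic step is fine: if a $0$ cost edge is genuinely a concatenation of $E_1$ edges whose $\pm 1$ costs sum to zero, then $c_{+}(e)=c_{-}(e)$ and $c_{+}(e)+c_{-}(e)=|e|$ force $c_{+}(e)=c_{-}(e)=\frac{|e|}{2}$. The problem is the inductive invariant you build this on, namely $\lc(e)=c_{+}(e)-c_{-}(e)$ for \emph{every} edge of $E_{\ell}$. That invariant is false in this algorithm precisely because of $\mbox{\em Normalize\_and\_Divide\_by\_2}$: a $+2$ cost edge (two concatenated $+1$ edges) is relabeled as a $+1$ edge, so from iteration $\ell=2$ onward a $+1$-labeled edge has $c_{+}(e)-c_{-}(e)=2^{\ell-1}$, not $1$. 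The paper says this explicitly: ``After iteration $\ell =2$, new $\pm 1$ edges are not $\pm 1$ exact paths,'' and the proof of Lemma~\ref{powers-of-two} records that the new $\pm 1$ edges have $c_{+}$-length or $c_{-}$-length $2^{\ell-1}$. Your paragraph on ``the main subtlety'' asserts exactly the false statement --- that a surviving $\pm 1$ edge's constituent $E_1$ costs sum to its normalized label --- so the induction as written does not go through.

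The gap is repairable, but it needs an argument you did not supply: you must show that whenever line~5 creates a $0$-labeled edge by joining a $+1$-labeled and a $-1$-labeled edge, those two edges have equal and opposite true imbalances. This holds because line~6 discards $\pm 1$ entries after each squaring and normalization only creates new $\pm 1$ labels from $\pm 2$ entries, so all $\pm 1$ edges in play at iteration $\ell$ share the same imbalance magnitude $2^{\ell-1}$ (this is the content of Lemma~\ref{powers-of-two}); joining one of each sign therefore yields imbalance $0$, and extensions by $0$ cost edges in line~9 preserve imbalance. Alternatively, you can avoid the induction entirely by citing the paper's correctness statements (Lemma~\ref{Min-Plus-First-Lemma} and the remark in Lemma~\ref{powers-of-two} that the $0$ cost edges are exact $0$ cost paths), after which the corollary is the two-equation arithmetic you already have. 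Either way, restrict the balance claim to $0$-labeled edges; do not claim it for $\pm 1$-labeled edges.
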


Corollary~\ref{balance} indicates $c_{+}(e) = c_{-}(e) = \frac{|e|}{2}$ for all $0$ cost edges $e$.
A $c_{+}$-length ($c_{-}$-length) may be any integer from $0$ to $n-1$.



The proof of the next lemma uses the idea that a new $\pm 1$ edge $e$ may merge with
an exact $0$ cost edge producing a new $\pm 1$ edge $e'$ with a larger $E_1$-length.
However both edges $e$ and $e'$ have the same $c_{\pm}$-length.

\begin{lemma}
{\sf
  Given an LDG $G_1 = (\Sigma, V, E_1)$, where $\Sigma$ is flat Dyck (flat semi-Dyck) and $|\Sigma| = 2$,
  then all exact $0$ cost edges created by line~5 in iteration $\ell$ of {\bf Digraph-flat-exact-paths} have
  $E_1$-length at least $2^{\ell-1}$, where $\lceil \log n \rceil +1 \geq \ell \geq 2$.
  }
  \label{powers-of-two}
\end{lemma}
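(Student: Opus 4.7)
The plan is to prove this by strong induction on $\ell$, strengthening the hypothesis to track the $\pm 1$ edges as well as the $0$ edges: I will show that at the start of iteration $\ell$ (equivalently, immediately after line~9 of iteration $\ell-1$, or the initial state for $\ell=2$), every edge remaining in $M$, whether labeled $0$ or $\pm 1$, has $E_1$-length at least $2^{\ell-2}$. With this strengthening the lemma drops out immediately: a new $0$ cost edge produced by line~5 of iteration $\ell$ is obtained by composing two edges of $M$ whose costs cancel, each of $E_1$-length at least $2^{\ell-2}$, hence of total $E_1$-length at least $2 \cdot 2^{\ell-2} = 2^{\ell-1}$.

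For the base case $\ell = 2$, the matrix $M$ contains exactly the edges from $E_1$, each of $E_1$-length $1 = 2^{0}$, and the only $0$ cost edges line~5 can create come from concatenating a $+1$ edge with a $-1$ edge, giving $E_1$-length $2 = 2^{2-1}$; both the strengthening and the lemma then hold.

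For the inductive step I would trace the strengthened claim through iteration $\ell-1$. By the inductive hypothesis, every edge in $M$ at the entry to iteration $\ell-1$ has $E_1$-length at least $2^{\ell-3}$. Line~5 then produces $0$ cost edges of $E_1$-length at least $2 \cdot 2^{\ell-3} = 2^{\ell-2}$ (which is exactly the lemma applied to iteration $\ell-1$) together with $\pm 2$ cost edges that can only arise from composing two same-sign $\pm 1$ edges, hence also have $E_1$-length at least $2^{\ell-2}$. Line~6 discards the stray $\pm 1$ cost edges (those coming from $0 \cdot (\pm 1)$ or $(\pm 1) \cdot 0$ compositions) and line~7 rescales each surviving $\pm 2$ edge to $\pm 1$; since the rescaling is a symbolic relabeling of the cost and leaves the underlying path from $E_1$ intact, the bound $2^{\ell-2}$ is preserved. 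Finally line~9 replaces $M$ with $Z M Z$: reading $Z$ as including implicit zero self-loops, the single-edge contribution preserves edges unchanged, while genuine three-part extensions only lengthen them by concatenating further $0$ cost edges whose $E_1$-length is itself at least $2^{\ell-2}$. Hence every edge of $M$ at the start of iteration $\ell$ has $E_1$-length at least $2^{\ell-2}$, closing the induction.

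The main obstacle I anticipate is the careful bookkeeping for line~7: the $\pm 1$ cost edges that survive to the next iteration are relabelled $\pm 2$ edges, and they must be kept conceptually distinct from the original cost-$\pm 1$ edges of $E_1$, which line~6 has already excised from the product. Once the relabelling is made explicit and its stability across iterations is verified, the $E_1$-length doubling at each line~5 then follows the familiar recursive-doubling pattern used by matrix-based transitive-closure algorithms.
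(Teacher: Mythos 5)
Your induction follows the same recursive-doubling idea as the paper, but the strengthened invariant you build it on is false. You claim that at the start of iteration $\ell$ \emph{every} edge of $M$, including every $0$ cost edge, has $E_1$-length at least $2^{\ell-2}$. However, exact $0$ cost edges found early are never discarded: any $+1$ edge of $E_1$ followed by a $-1$ edge of $E_1$ yields a $0$ cost edge of $E_1$-length exactly $2$ in iteration $\ell=2$, and under your own reading of line~9 (implicit zero self-loops, so $M \leftarrow Z\,M\,Z$ retains every existing edge) this edge remains in $M$ in all later iterations --- and it must remain, since later calls to Get-Zero-Edges and the final $0$-reachability output depend on it. For every $\ell \geq 4$ this length-$2$ edge violates your invariant, so the inductive step cannot be closed as stated. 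The failure hits exactly where you invoke the invariant: you deduce the lemma by saying a new $0$ edge from line~5 is a product of two $M$ edges ``whose costs cancel,'' which includes $0\cdot 0$ compositions, and for those you would need \emph{both} factors to be long; the short persistent $0$ edges are precisely the factors you cannot bound. (The $\pm 1$ half of your invariant is sound: line~6 deletes the $\pm 1$ entries of the product, so the surviving $\pm 1$ edges are re-created each iteration from same-sign pairs and then only lengthened by line~9; this is essentially the paper's $c_{+}$/$c_{-}$ bookkeeping.)

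The paper avoids the problem by never asserting anything about all edges of $M$: its induction is only over the edges \emph{newly created by line~5} in each iteration, it explicitly excludes line~9 from the induction on the grounds that line~9 only extends edges by adjoining $0$ cost edges, and it tracks the $c_{+}$- and $c_{-}$-lengths of the new $\pm 1$ edges so that a new $0$ edge from line~5 is accounted for as the join of a $+1$ and a $-1$ edge produced in the previous iteration, each of $E_1$-length at least $2^{\ell-2}$. To repair your argument you would either restrict the invariant to the $\pm 1$ edges and then argue separately that a $0\cdot 0$ composition appearing in line~5 is not actually \emph{new} (adjacent $0$ edges already present in $M$ were joined by line~9 of the previous iteration, so the problematic short-times-short products do not create new edges), or else follow the paper and make the line-5-created edges, with their $c_{\pm}$-lengths, the objects of the induction.
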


\begin{proof}
	Without loss, this proof focuses on the algorithm in Figure~\ref{New-Min-Plus}.
	Line~9 extends $\pm 1$ and exact $0$ cost edges using $0$ edges of $E_1$-length
	from $2$ to $3^{\ell -2} \cdot 2^{\ell -1}$. 
        It extends adjoining pairs of exact $0$ cost paths in flat grammars.
        Thus, line~9 is not in the next induction.
	The induction is on the iteration $\ell$ and includes $\{ -1, 0, +1 \}$ edges of $E_1$-length of at least $2^{\ell-1}$.\\

    \noindent
        {\bf Basis} In iteration $\ell = 2$, line~5 
		computes exact $0$ cost edges of $E_1$-length at least $2^{\ell -1} = 2$. 
		Likewise, line~5 generates all $\pm 2$ cost edges with $E_1$-length of at least $2^{\ell -1} = 2$.
		These $\pm 2$ cost edges are converted to $\pm 1$ cost edges, by normalization in line~7.
        Their $E_1$-lengths remain at least $2^{\ell-1}$, but their $c_{+}$-length or $c_{-}$-length is $2^{\ell -1}$.\\
		
    \noindent
        {\bf Inductive Hypothesis} Assume for some $\lambda$, all iterations $\ell$ where $\lambda \geq \ell \geq 2$ 
		 are such that line~5 computes $\{ -1, 0, +1 \}$ cost edges of $E_1$-length at least $2^{\ell -1}$.
		 Here the $0$ cost edges are exact $0$ cost paths.
                 The new $\pm 1$ edges have $c_{+}$-length and $c_{-}$-length of $2^{\ell -1}$, respectively.\\

    \noindent
        {\bf Inductive Step}  Consider iteration $\lambda +1$ for some $\lambda$ where $\lambda \geq \ell \geq 2$.

		By the inductive hypothesis, in iteration $\lambda = \ell$, line~5 
		computes exact $0$ cost edges of $E_1$-length at least $2^{\lambda-1}$.
		
		In iteration $\lambda = \ell$, $\mbox{\em Normalize\_and\_Divide\_by\_2}$ produces new $\pm 1$ edges only if they 
		are $\pm 2$ edges just generated by line~5.
		These new $\pm 1$ edges have $E_1$ length of at least $2^{\lambda-1}$
		by the inductive Hypothesis.
                Also the inductive Hypothesis indicates these new $\pm 1$ edges have
                $c_{+}$-length or $c_{-}$-length of $2^{\ell -1}$, respectively.
                		
		In conclusion, during iteration $\lambda +1$, line~5 combines adjoining $+1$ and $-1$ cost edges
		forming exact $0$ cost edges with $E_1$-length at least $2 \cdot 2^{\lambda-1} = 2^{\lambda}$.
                Also, the new $+1$ edges have $c_{+}$-length $2^{\lambda}$, and the new $-1$ edges have $c_{-}$-length $2^{\lambda}$.
                This is because new $\pm 2$ edges are made by joining $\pm 1$ edges from the previous iteration.
\end{proof}

The algorithm {\bf Digraph-flat-exact-paths} uses $O(\log n)$ algebraic matrix multiplications. 
The same is true for the Dyck extension given in Figure~\ref{New-Min-Plus-Dyck}.
Each matrix multiplication costs $O(n^{\omega} \log n)$. This gives a total cost of
$O(n^{\omega} \log^2 n)$ time for the flat Dyck (semi-Dyck) case.

\begin{theorem}
{\sf
  Given an LDG $G = (\Sigma, V, E_1)$, where $\Sigma$ is flat Dyck (flat semi-Dyck) and $|\Sigma| = 2$,
  then Figure~\ref{New-Min-Plus}'s algorithm (updated by Figure~\ref{New-Min-Plus-Dyck}) finds all
  exact $0$ paths in $\widetilde{O}(n^{\omega})$ time.
}
\end{theorem}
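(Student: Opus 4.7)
The plan is to split the argument into a runtime tally and a correctness induction. Runtime is already essentially tallied in the paragraph preceding the theorem: with $\lceil \log n \rceil + 1$ iterations, each performing a constant number of AGMY algebraic matrix products (each of cost $O(n^{\omega} \log n)$ by the bit-size corollary following Lemma~\ref{mm-bound}) together with a $\mbox{\em Normalize\_and\_Divide\_by\_2}$ sweep costing $O(n^2 \log n)$, the total is $\widetilde{O}(n^{\omega})$. I would simply assemble these per-iteration costs and multiply through by the $O(\log n)$ iteration count.

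For correctness I would first recall from Definition~\ref{flatGrammars} that a flat Dyck word is a concatenation $w_1 w_2 \cdots w_m$ where every $w_i$ is a pyramid $a^{k_i} (a^{-1})^{k_i}$ (and additionally $(a^{-1})^{k_i} a^{k_i}$ in the semi-Dyck case). A $0$ cost path in $G$ whose label is flat therefore decomposes into consecutive pyramids of depths $k_i \leq n$. The two matrix operations in {\bf Digraph-flat-exact-paths} correspond directly to the two layers of the grammar: line~5 followed by normalization builds pyramids by wrapping $\pm 1$ runs of doubled $c_{\pm}$-length around previously-built $0$ edges (the $P \to a P a^{-1}$ rule), while line~9 realizes $T \to TT$ by sandwiching with the matrix $Z$ of currently known $0$ edges.

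The correctness argument then proceeds by induction on the iteration index $\ell$. The inductive hypothesis would assert that after iteration $\ell$: every $\pm 1$ edge realisable by a $\pm 1$-run of length at most $2^{\ell-1}$ from $E_1$ appears in $M$ (this is exactly Lemma~\ref{powers-of-two}); every pyramid of depth at most $2^{\ell-1}$ appears as a $0$ edge; and every concatenation of such pyramids realizable along sub-paths already discovered is present as a $0$ edge. The base $\ell = 2$ handles depth-$1$ pyramids and reduces to the basis of Lemma~\ref{Min-Plus-First-Lemma}. In the step, line~5 pairs an existing $+k$-run with an existing $-k$-run to create a new pyramid of depth at most $2^{\ell-1}$, while simultaneously merging adjacent $\pm 1$-runs into normalised $\pm 1$ edges of doubled $c_{\pm}$-length; line~9 then absorbs any number of previously known $0$ pieces on either side of a new pyramid via one algebraic product with $Z$. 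Since $\lceil \log n \rceil + 1$ iterations bring $c_{\pm}$-length past $n$, every realisable flat path is captured.

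The main obstacle I anticipate is justifying that line~9's sandwich $Z M Z$ truly captures arbitrary $T \to TT$ concatenations rather than just three-piece concatenations. The resolution is that the $Z$ drawn at line~8 already encodes all concatenations assembled in previous iterations, so a single product on each side suffices to surround any new middle piece. A subtler but local issue in the Dyck variant is preventing a genuine $-1$ edge from opening a pyramid: Figure~\ref{New-Min-Plus-Dyck}'s markup adds $(3(n+1))^{-4}$ to each $-1$ AGMY entry so that a forbidden $-1 \cdot +1$ adjacency emits a cubic term $(3(n+1))^{-3}$ which the modified $\mbox{\em Normalize\_and\_Divide\_by\_2}$ can excise without touching any legitimate $+1 \cdot -1$ pairing. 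Checking that this tag does not collide with any other AGMY value arising in a single product is the most delicate calculation, and it would rest directly on the magnitude bound of Lemma~\ref{mm-bound}.
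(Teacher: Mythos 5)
Your proposal is correct and follows essentially the same route as the paper: the paper gives no separate proof beyond the runtime tally in the preceding paragraph ($O(\log n)$ iterations, each an $O(n^{\omega}\log n)$ AGMY product plus an $O(n^2\log n)$ normalization) combined with Lemma~\ref{powers-of-two} for the doubling of $E_1$-length ($c_{\pm}$-length) per iteration and the Figure~\ref{New-Min-Plus-Dyck} markup for the Dyck restriction, which is exactly what you assemble. The only caveat is that you read Lemma~\ref{powers-of-two} slightly more strongly than stated (as a completeness claim for all runs of length at most $2^{\ell-1}$ rather than a lower bound on the $E_1$-length of newly created edges), but this is the reading the paper itself relies on, so your argument is in line with, and no less rigorous than, the original.
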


\section{Exact $0$ cost paths and Dyck and semi-Dyck grid graphs}
\label{sec:grid-graphs}

Dyck and semi-Dyck languages are generalizations of the Flat Dyck and flat semi-Dyck languages.
Grid paths enable the transition from flat grammars to the general case. 
Each acyclic path in an LDG has an equivalent path in a grid graph.
See an example grid graph in Figure~\ref{gridGraph}.

Pyramids and valleys are distinct grid graphs.
Pyramid paths are generated by the non-terminal $P$ in Definition~\ref{flatGrammars}.
Pyramids have Dyck labels ${\bf a}^{k} \, {\bf a}^{-k}$, for $k \geq 1$.
Two adjoint pyramids in a Dyck grid paths share a valley. This shared valley
is not a Dyck path on its own.
Indeed, pyramids are building blocks of Dyck paths in grid graphs.
In semi-Dyck grid paths, the valleys are themselves semi-Dyck words.
That is, pyramids and valleys are building blocks for semi-Dyck paths in grid graphs.
Semi-Dyck path valleys are labeled ${\bf a}^{-k} \, {\bf a}^{k}$, for $k \geq 1$.

%
%

\begin{definition}[Pyramids and Valleys]
{\sf
Consider an LDG $G = (\Sigma, V, E_1)$.
  A Dyck pyramid is a maximal path labeled by ${\bf a}^{k} \, {\bf a}^{-k}$, for $k \geq 1$.  
  A Dyck pyramid path $p$ is maximal since its label is ${\bf a}^{k} \, {\bf a}^{-k}$ and ${\bf a}^{k+1} \, {\bf a}^{-k-1}$ does not label a valid Dyck
  path containing $p$. 

  Semi-Dyck paths also includes maximal valleys labeled by ${\bf a}^{-k} \, {\bf a}^{k}$.
}
\label{pyramids}
\end{definition}

A {\em peak} is labeled ${\bf a} \, {\bf a}^{-1}$.
In a grid graph, a pyramid starting from $(i,j)$ and ending at $(i+t,j)$ has {\em base level} $j$.
Figure~\ref{non-flat-levels} shows base levels $0,1$ and $2$. 
The $y$-axis of Figure~\ref{gridGraph} shows grid levels.

In a grid graph, a semi-Dyck path starts from point $(0,0)$ and ends at some point $(x,0)$ for an integer $x \geq 0$.
A Dyck path in a grid graph never has a $y$ coordinate below $0$.
In general, an LDG $+1$ edge is equivalent to a grid graph edge going from $(x,y)$ to $(x+1,y+1)$, for $x \geq 0$.
Likewise, an LDG $-1$ edge is equivalent to a grid graph edge going from $(x,y)$ to $(x+1,y-1)$.

Let $p$ be an exact $0$ cost path in an LDG. In a grid graph $p$ is,
\begin{eqnarray*}
p & = & (x_1,y_1), \, (x_2,y_2), \ \cdots, \ (x_n,y_n).
\end{eqnarray*}
so that $(x_1,y_1) \, = \, (0,0)$, $(x_n,y_n) \, = \, (x_n,0)$, and $y_1 + \cdots + y_{n} = 0$.
In such a path, its {\em maximal peak(s)} are at level $\max\{ \, y_1, \ \cdots, \ y_n \, \}$.

\begin{figure}[ht]
 \vspace{-0.25cm}
 \begin{center}
 \includegraphics[height=5cm,width=16cm]{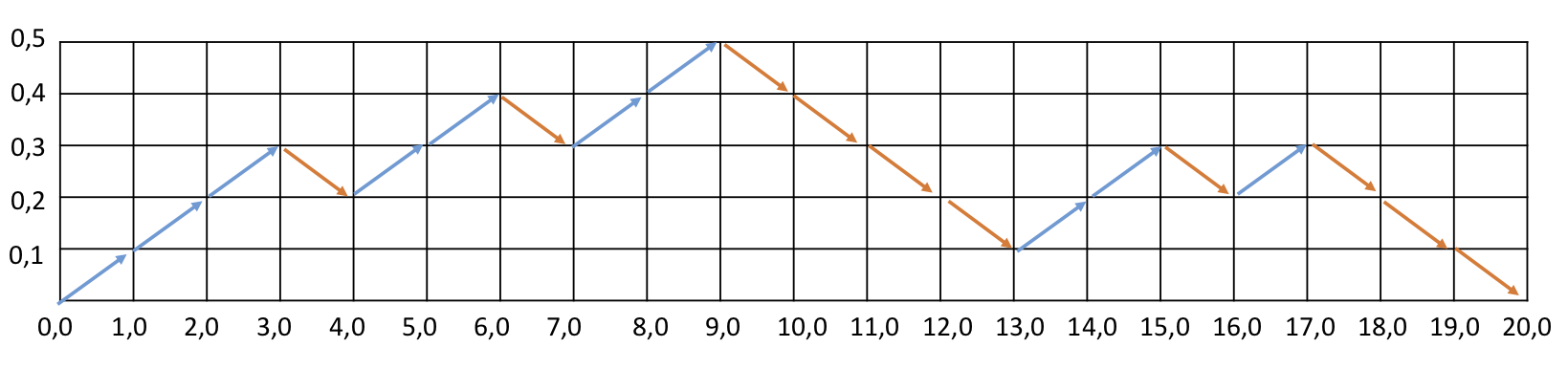}
 \end{center}
 \vspace{-0.6cm}
\caption{An exact $0$ cost Dyck path on a grid with maximum peak $(9,5)$ at level $5$}
\label{gridGraph}
\end{figure}

Start with an exact $0$ cost path $p$, then two exact $0$ cost subpaths $p_1$ and $p_2$ are distinct
iff $E[p_1] \cap E[p_2] = \emptyset$.
Suppose $p$ does not form a cycle.
Two exact $0$ cost distinct subpaths $p_1$ and $p_2$ are {\em adjoining} when they share exactly one vertex and have the same base level.
This common vertex joins the end of one of these paths to the start of the other.

\begin{definition}[Pairs]
{\sf
  A {\em pyramid pair} is an adjoining pair of pyramids.
  A {\em valley pair} is an adjoining pair of valleys.
  Likewise, a {\em mixed pair} is an adjoining pyramid (valley) and valley (pyramid).
}
\label{pyramid pairs}
\end{definition}

Figure~\ref{flatPath} shows three pyramid pairs in a flat Dyck path.
The two left peaks in Figure~\ref{non-flat-levels} form a pyramid pair on level~2, but not level~1 or~0.

\begin{figure}[ht]
 \begin{center}
 \includegraphics[height=4cm,width=12cm]{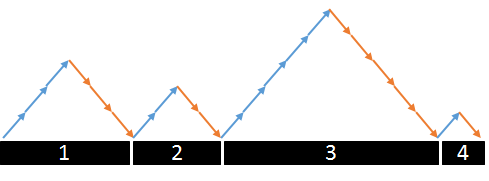}
 \end{center}
 \vspace{-0.6cm}
\caption{Three pyramid pairs $\{ \, (1,2), \, (2,3), \, (3,4) \}$ in a flat Dyck path}
\label{flatPath}
\end{figure}

Consider an exact $0$ cost path forming a pyramid pair.
Intuitively, each of these pyramids are independent since {\bf Digraph-flat-exact-paths}
finds their exact $0$ paths independently.

\subsection{The general case}
\label{general-case-subsection}

Consider a Dyck pyramid pair ${\bf a}^{k_1} \, {\bf a}^{-k_1} \, {\bf a}^{k_2} \, {\bf a}^{-k_2}$ for integers $k_1 \geq 1,  \, k_2 \geq 1$.
The word 
$${\bf a}^{s} \, {\bf a}^{k_1} \, {\bf a}^{-k_1} \, {\bf a}^{k_2} \, {\bf a}^{-k_2} \, {\bf a}^{-s},$$
has the {\em exterior pair} ${\bf a}^{s}$ and ${\bf a}^{-s}$, for an integer $s \geq 1$, see for example~\cite{DENISE1995155}.
Exterior pairs are always made of pairs of matching elements.
Semi-Dyck words have exterior pairs ${\bf a}^{s}$ and ${\bf a}^{-s}$ for any integer $s \neq 0$.
Combining exterior pairs with flat grammars gives the general Dyck and semi-Dyck cases.
%
%

An {\em isolated} pyramid (valley) has no adjoining pyramid (valley).
Pyramids and valleys are isolated by exterior pairs.
If ${\bf a}^{k} \, {\bf a}^{-k}$ is an isolated pyramid, then
it is enclosed by at least one exterior pair.

The rightmost pyramid in Figure~\ref{non-flat-levels} is an isolated pyramid. This isolated pyramid has label ${\bf a}^2 \, {\bf a}^{-2}$.
There is an isolated pyramid pair at base level~2. These two pyramids are on the left.

\begin{definition}[Isolated paths]
  {\sf
  In a grid graph,
  an $m$ {\em isolated path} is any maximal sequence of $m$ pyramids or valleys all adjoining at the same base level.
}
\end{definition}

A consequence of the definition of an $m$ isolated path is it has no $(m+1)$st adjoining pyramid or valley.
Isolated paths with $m=2$ are {\em isolated pairs}.
Isolated paths with $m=4$ are {\em isolated quads}.

The four boxes in Figure~\ref{mountain} are isolated paths.
There are also two pyramid pairs with three peaks in the middle.
These pyramids are contained by an exterior pair.

An {\em invocation} of {\bf Digraph-flat-exact-paths} runs $\lceil \log n \rceil$ iterations
from line~4 in Figure~\ref{New-Min-Plus}.
One invocation of {\bf Digraph-flat-exact-paths} converts all of these isolated paths into exact $0$ cost edges.

\begin{figure}[ht]
 \begin{center}
 \includegraphics[height=4cm,width=12cm]{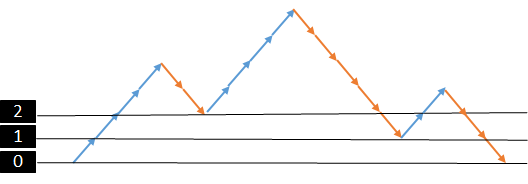}
 \end{center}
 \vspace{-0.6cm}
\caption{Level 2 has one isolated pyramid pair on the left, level 1 has an isolated pyramid on the right}
\label{non-flat-levels}
\end{figure}

\begin{figure}[ht]
 \begin{center}
 \includegraphics[height=4cm,width=12cm]{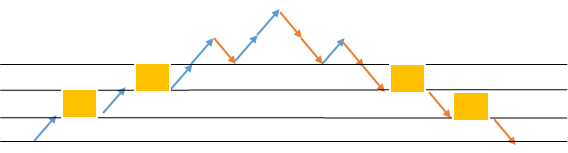}
 \end{center}
 \vspace{-0.6cm}
\caption{A Dyck or semi-Dyck path with four isolated paths in the boxes}
\label{mountain}
\end{figure}

\begin{corollary} 
{\sf
 Given an LDG $G_1 = (\Sigma, V, E_1)$, where $\Sigma$ is Dyck (semi-Dyck) and $|\Sigma| = 2$,
  then one invocation of {\bf Digraph-flat-exact-paths} finds an exact $0$ cost path for
  an $m$ isolated path, where $m: n/2 \geq m \geq 2$.
}
\label{m-pyramids}
\end{corollary}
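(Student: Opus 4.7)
The plan is to reduce the statement to the preceding flat-case theorem by exhibiting every $m$ isolated path as a flat Dyck (flat semi-Dyck) substructure of the ambient LDG. Concretely, an $m$ isolated path consists of $m$ pyramids (or valleys for semi-Dyck) $p_1, \ldots, p_m$ adjoining at a single common base level, each of the form ${\bf a}^{k_i} {\bf a}^{-k_i}$ (or ${\bf a}^{-k_i} {\bf a}^{k_i}$ for semi-Dyck). Every $p_i$ is derived by $P \Longrightarrow {\bf a} P {\bf a}^{-1} \mid \epsilon$ (plus the valley rule in the semi-Dyck case), and their concatenation at the base level is derived by $T \Longrightarrow TT \mid P \mid \epsilon$, which is exactly Definition~\ref{flatGrammars}. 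So even if the whole LDG is arbitrary Dyck (semi-Dyck), the restriction to the edges comprising an $m$ isolated path is itself flat.

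From there I would invoke the preceding flat-case theorem: {\bf Digraph-flat-exact-paths} finds all exact $0$ cost paths inside a flat substructure within a single invocation. The point is that the AGMY operations in lines~5 and~9 act locally on edge pairs via the sign-closure rules, so their ability to collapse the flat structure of the isolated path does not depend on what else is present in the matrix. Hence the isolated path becomes a single exact $0$ cost edge by the end of one invocation.

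For the iteration count I would appeal to Lemma~\ref{powers-of-two}: each pyramid $p_i$ has $E_1$-length $2k_i \leq n$, so line~5 discovers it as an exact $0$ cost edge by iteration $\lceil \log n \rceil$. Once each pyramid is a single exact $0$ cost edge, the $m \leq n/2$ adjoining $0$ cost edges at the common base level must be merged. Line~9 performs the product $Z M Z$, which in a single iteration can fuse up to three adjoining $0$ cost edges into one, so $O(\log_3 m) = O(\log n)$ further passes suffice. All of this sits comfortably inside the $\lceil \log n \rceil + 1$ outer iterations of {\bf Digraph-flat-exact-paths}.

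The hard part will be verifying that the ambient non-flat structure of a general Dyck (semi-Dyck) LDG does not interfere with the flat-case argument restricted to the isolated path. In the semi-Dyck variant this is immediate because the algorithm is purely additive. In the Dyck variant one must check that the $\mbox{\em Markup\_minus\_one\_edges}$ mechanism of Figure~\ref{New-Min-Plus-Dyck}, which suppresses $0$ edges formed by a $-1$ edge followed by a $+1$ edge, never deletes an edge arising along the isolated path. This holds because inside each pyramid $p_i$ every $\pm 1$ merge is a $+1$-then-$-1$ pair occurring strictly above the base level (and dually via valleys for the semi-Dyck case), so the non-Dyck deletion rule is never triggered on edges of the isolated path. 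That closes the argument.
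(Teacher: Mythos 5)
Your argument is essentially the paper's own: the key observation in both is that $m$ pyramids (valleys) adjoining at a common base level constitute a flat Dyck (flat semi-Dyck) path, so the flat-case result (Lemma~\ref{powers-of-two} and the flat-grammar theorem) guarantees one invocation of {\bf Digraph-flat-exact-paths} collapses the isolated path to an exact $0$ cost edge. Your extra checks (that the ambient non-flat structure and the Dyck markup rule of Figure~\ref{New-Min-Plus-Dyck} do not interfere with edges of the isolated path) are elaborations the paper leaves implicit, not a different route.
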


\begin{proof}
Without loss, the focus is on an $m$ isolated path of pyramids.
Since the $m \leq n/2$ pyramids form an adjoining isolated path, they all have the same base level.
This means the $m$ pyramids form a flat Dyck path.
So an invocation {\bf Digraph-flat-exact-paths} computes the flat Dyck reachability
of all adjoining pyramid pairs by Lemma~\ref{powers-of-two}.
\end{proof}

A Dyck {\em inner segment} is a Dyck word $w$ between a set of exterior pairs ${\bf a}^s$ and ${\bf a}^{-s}$, for integers $s \geq 1$.
The word 
${\bf a}^s \, w \, {\bf a}^{-s}$ labels a valid Dyck path $p$ and $s$ is maximal so ${\bf a}^{s+1} \, w \, {\bf a}^{-s-1}$ does not label a valid
path containing $p$.
The semi-Dyck case has maximal exterior pairs ${\bf a}^{s}$ and ${\bf a}^{-s}$, for integers $s \neq 1$.
A semi-Dyck inner segment is a semi-Dyck word.

Inner segments are contained by {\em pyramid (valley) bases} in the Dyck (semi-Dyck) case.
If an inner segment is empty, then the exterior pairs ${\bf a}^s$ and ${\bf a}^{-s}$ (${\bf a}^{-s}$ and ${\bf a}^{s}$) form a pyramid (valley), for $s \geq 1$.

\begin{corollary}
{\sf
 Given an LDG $G_{\ell} = (\Sigma, V, E_{\ell})$, where $\Sigma$ is Dyck (semi-Dyck) and $|\Sigma| = 2$ and $\ell \geq 2$,
  and suppose an $0$ cost edge connects an inner segment of $s$ exterior pairs.
  Then one invocation of {\bf Digraph-flat-exact-paths} 
  finds an exact $0$ path through this inner segment and these exterior pairs.
}
\label{enclosing}
\end{corollary}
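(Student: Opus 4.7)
The plan is to exploit the fact that, once the inner segment has been collapsed into a single $0$ cost edge $e_0 = v_s \stackrel{0}{\longrightarrow} v_s'$, the structure under consideration is
$$v_0 \stackrel{+1}{\longrightarrow} v_1 \stackrel{+1}{\longrightarrow} \cdots \stackrel{+1}{\longrightarrow} v_s \stackrel{0}{\longrightarrow} v_s' \stackrel{-1}{\longrightarrow} v_{s-1}' \stackrel{-1}{\longrightarrow} \cdots \stackrel{-1}{\longrightarrow} v_0'$$
(with the roles of $+1$ and $-1$ swapped in the semi-Dyck valley case). If $e_0$ were contracted, this would be the pure pyramid $\mathbf{a}^s \mathbf{a}^{-s}$, a single $P$-production of Definition~\ref{flatGrammars}, and the flat-grammar analysis behind Lemma~\ref{powers-of-two} would immediately apply. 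So the task is really to show that the presence of $e_0$ in the middle does not obstruct the doubling-and-collapsing that the invocation performs on the surrounding exterior pairs.

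I would argue this by adapting the induction of Lemma~\ref{powers-of-two}, carrying the central $0$ edge $e_0$ through the iterations of the loop on line~4 of {\bf Digraph-flat-exact-paths}. In iteration $\ell$, lines~5 and~7 double the $c_+$-length of the surviving left-side $+1$ edges and the $c_-$-length of the surviving right-side $-1$ edges from $2^{\ell-2}$ to $2^{\ell-1}$, exactly as in Lemma~\ref{powers-of-two}; the central $e_0$ is irrelevant to these products because $+1$ and $-1$ edges are never adjacent in our path. Simultaneously, line~9's extension $Z M Z$ uses $e_0$ as a suffix (respectively, prefix) $0$ edge to extend the innermost surviving $+1$ edge across $e_0$ and the innermost surviving $-1$ edge across $e_0$, so that after line~9 the two innermost surviving $\pm 1$ edges share an endpoint. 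Line~5 of the next iteration then combines these newly adjacent $\pm 1$ edges into a $0$ edge that covers $e_0$ together with one extra exterior layer on each side, yielding a larger central $0$ edge that takes the role of $e_0$ in the next round. After $\ell = \lceil \log s \rceil + 1 \leq \lceil \log n \rceil + 1$ iterations, the entire $v_0$-to-$v_0'$ walk is represented by a single exact $0$ cost edge, which is what is required.

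The main obstacle is precisely that $e_0$ breaks the literal flat-Dyck pattern assumed by Lemma~\ref{powers-of-two}: the innermost $+1$ and $-1$ are not adjacent, so line~5 of iteration $\ell = 2$ does not by itself produce the seed $0$ edge that normally starts the pyramid collapse. The resolution is that line~9 of {\bf Digraph-flat-exact-paths} is designed for exactly this situation -- $Z M Z$ permits a $0$ edge to serve as a prefix or suffix while extending a $\pm 1$ or $0$ edge -- so line~9 in iteration $\ell = 2$ already produces, on each side of $e_0$, extended $\pm 1$ edges whose inner endpoints coincide; from that point on the collapse proceeds exactly as in the pyramid case of Lemma~\ref{powers-of-two}. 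For the Dyck variant one additionally verifies that the markup mechanism of Figure~\ref{New-Min-Plus-Dyck} never flags these merges, since no $-1$ edge is ever immediately followed by a $+1$ edge inside the exterior pairs $\mathbf{a}^s \cdots \mathbf{a}^{-s}$. The semi-Dyck valley case is symmetric under the exchange $+1 \leftrightarrow -1$.
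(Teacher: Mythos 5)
Your argument is correct and takes essentially the same route as the paper: the paper's own proof simply observes that once the inner segment is a single $0$ cost edge the surrounding exterior pairs behave as a flat pyramid (valley) and appeals to Lemma~\ref{powers-of-two}, whose line-5 doubling together with line-9 extension by $0$ edges is exactly the mechanism you spell out. Your write-up just unpacks that appeal in more detail (with one cosmetic slip: after the bridging step the new central $0$ edge absorbs $2^{\ell-1}$ exterior layers per side, not one), so nothing essential differs.
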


\begin{proof}
Suppose a set of $s$ exterior pairs contains a $0$ cost edge. 
It must be that, $s < n/2$, and by assumption {\bf Digraph-flat-exact-paths} already found inner segment's exact $0$ cost path
by iteration $\ell$.
Thus, {\bf Digraph-flat-exact-paths} finds the exact $0$ cost path including these exterior pairs by
Lemma~\ref{powers-of-two}.
\end{proof}

Consider {\bf Digraph-flat-exact-paths}.
Finding exact $0$ cost paths for isolated paths is not compatible with finding paths in
their exterior pairs.
This incompatibility is handled by careful iterations of {\bf Digraph-flat-exact-paths}.
See a single iteration in Figure~\ref{general-case}.
When needed, assume lines~3 and~4 are adapted for the Dyck case, see the discussion accompanying Figure~\ref{New-Min-Plus-Dyck}.

\begin{convention}[Atomic invocations of {\bf Digraph-flat-exact-paths}]
{\sf
For the worst case of Figure~\ref{general-case},
the algorithm {\bf Digraph-flat-exact-paths} is atomic.
}
\label{invocations}
\end{convention}

\begin{figure}[ht]
\begin{center}
  \leavevmode
\framebox[1.05\width][l]{\vbox{
\begin{tabbing}
a \= text \= text \= text \= text \= text \kill
\> 1. $M \leftarrow \mbox{\bf Digraph-flat-exact-paths}($G$)$ \ // Fig.~\ref{New-Min-Plus} (semi-Dyck add Fig.~\ref{New-Min-Plus-Dyck})\\
\> 2. $M \leftarrow M + \mbox{\bf AdjMatrix}(G)$ \ // original edges plus exact $0$ paths\\
\> 3. $M \leftarrow M^2$ \ // AGMY multiplication, extending $\pm 1$ edges with exact $0$ paths\\
\> 4. $M \leftarrow \mbox{\em Normalize\_and\_Divide\_by\_2}(M)$
\end{tabbing}
}
}
\end{center}
\vspace{-0.125in}
\caption{A single iteration for solving the general case, the general Dyck or semi-Dyck solution iterates these four steps $\lceil \log n \rceil$ times.}
\label{general-case}
\end{figure}

One invocation of {\bf Digraph-flat-exact-paths}, Figure~\ref{New-Min-Plus}, does not find the exact $0$ path from start to end
of the top path in Figure~\ref{NonFlatCases}.
Similarly, one invocation of semi-Dyck version of {\bf Digraph-flat-exact-paths} does not find the exact $0$ cost path from start to end for the bottom path.

\begin{figure}[ht]
 \begin{center}
 \includegraphics[height=3cm,width=6cm]{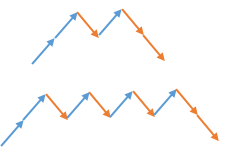}
 \end{center}
 \vspace{-0.6cm}
\caption{Cases where {\bf Digraph-flat-exact-paths} alone does not work}
\label{NonFlatCases}
\end{figure}

The next discussion illustrates the challenge of atomic invocations of {\bf Digraph-flat-exact-paths}.
Then we show how iterations of Figure~\ref{general-case} correct for these challenges.

Using label-costs and node names from $0$ to $6$, the top path in Figure~\ref{NonFlatCases} is,
\begin{eqnarray*}
p \ \ = \ \ 0 \stackrel{+1}{\longrightarrow} 1 \stackrel{+1}{\longrightarrow}  2 \stackrel{-1}{\longrightarrow} 3 \stackrel{+1}{\longrightarrow}  4 \stackrel{-1}{\longrightarrow} 5 \stackrel{-1}{\longrightarrow} 6.
\end{eqnarray*}

The Dyck algorithm {\bf Digraph-flat-exact-paths} fails to find the exact $0$ path for $p$. 
To see this, 
let $p_1$ be the subpath made of the middle two edges, from node~2 to~4. So, $p_1$ is labeled with ${\bf a}^{-1} \ {\bf a}$ and
$p_1$ is not Dyck. 
In its first iteration, the algorithm finds $\pm 2$ edges $0 \stackrel{+2}{\longrightarrow} 2$ and $4 \stackrel{-2}{\longrightarrow} 6$.
Also the $\pm 2$ edges are normalized to $\pm 1$ edges: $0 \stackrel{+1}{\longrightarrow} 2$ and $4 \stackrel{-1}{\longrightarrow} 6$.
In the Dyck case, there are no exact $0$ paths to extend these new $\pm 1$ edges, so they are removed in the second iteration.
This leaves no exact $0$ path from~$0$ to~$6$. 

Consider the path $p$ at the top of Figure~\ref{NonFlatCases}.
The semi-Dyck version of {\bf Digraph-flat-exact-paths} finds the exact $0$ cost path along $p$.
This is because the middle two edges, from node~$2$ to~$4$, are labeled ${\bf a}^{-1} \, {\bf a}$.
So in the first iteration, an exact $0$ semi-Dyck path is found from $2$ to $4$.
Also in this iteration, as in the Dyck case, a new $+1$ is created from $0$ to $2$.
Likewise, a $-1$ edge is created from $4$ to $6$.
All told, in line~9 of the first iteration, the new $+ 1$ edge is extended from node $0$ to $4$ and the $-1$ edge is extended to go
from $2$ to $6$.
Thus, the second iteration finds the semi-Dyck path from $0$ to $6$.

The flat semi-Dyck (Dyck) algorithm {\bf Digraph-flat-exact-paths} does not find the exact $0$ path for the bottom path of Figure~\ref{NonFlatCases}.
This is because in its first iteration it does not find an exact $0$ cost path from the first pyramid peak to the fourth pyramid peak.
It does find the exact $0$ path joining the three exact $0$ cost valleys using line~9.
This same iteration extends the new $\pm 1$ edges after normalizing the $\pm 2$ edges at the start and end.
The new $+1$ edge is extended to the second pyramid peak. The $-1$ edge is extended from the third pyramid peak.
These extensions are all done by line~9. So they are computed in the same algebraic matrix multiplication that forms the exact $0$ path joining the three valleys.
Therefore, the extended $\pm 1$ edges cannot reach each other, so in the next iteration they cannot form an exact $0$ cost path in line~5.

\paragraph{An iterative solution.}

The general solution is based on iterating invocations of {\bf Digraph-flat-exact-paths}.
See Figure~\ref{general-case}.
After each run of the flat path algorithm, all original $\pm 1$ edges are extended by the new exact $0$ paths.
Any new $\pm 2$ edges are normalized to $\pm 1$ edges in preparation for the next iteration.
The process is repeated for finding more exact $0$ paths.
After each invocation of {\bf Digraph-flat-exact-paths}, all new exact $0$ paths remain for subsequent iterations.

\begin{figure}[ht]
 \begin{center}
 \includegraphics[height=3cm,width=12cm]{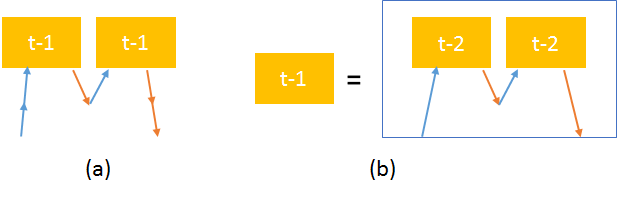}
 \end{center}
 \vspace{-0.6cm}
\caption{A worst case Dyck path for $t > 2$, where the $t=1$ case is the empty block attaching its input edge directly to its output edge, giving $m=2$ pyramids}
\label{worstCaseDyckPaths}
\end{figure}

%
%

In Figure~\ref{worstCaseDyckPaths}(a), if the leftmost $t-1$ block is empty, the rightmost $t-1$
block continues recursively
using the Figure~\ref{worstCaseDyckPaths}(b), then the leftmost side is an isolated pyramid.
In general, if the leftmost (rightmost) $t-1$ block is an isolated path, then it will be replaced by an exact $0$ cost path in one invocation
of {\bf Digraph-flat-exact-paths} by Corollary~\ref{m-pyramids}.
Each time this algorithm find an exact $0$ cost path, a new $0$ cost edges is created.

\begin{figure}[ht]
 \begin{center}
 \includegraphics[height=3cm,width=15cm]{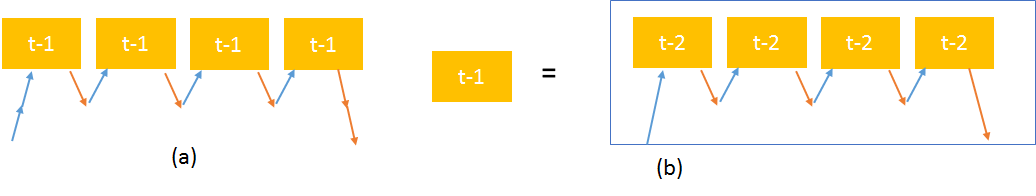}
 \end{center}
 \vspace{-0.6cm}
\caption{A worst case semi-Dyck path for $t > 2$, where the $t=1$ case is the empty block attaching its input edge directly to its output edge giving $m=4$ pyramids}
\label{worstCaseSemiDyckPaths}
\end{figure}

This is because, for Dyck paths, Corollary~\ref{m-pyramids} shows $m > 2$ pairs of adjoining pyramids require a single invocation
of {\bf Digraph-flat-exact-paths}.
Just the same, a single invocation of {\bf Digraph-flat-exact-paths} also finds an exact $0$ path for pyramid pairs.
Indeed, reducing $m>2$ adjoining pyramids to $m=2$ pyramids, frees vertices to contribute to worst case subpaths.
Similarly, 
Corollary~\ref{enclosing} indicates exterior pairs ${\bf a}^{s}$ and ${\bf a}^{-s}$, for $s \geq 1$, may be reduced to exterior pairs
${\bf a}$ and ${\bf a}^{-1}$ where $s=1$.
Since if $s>1$ and say an exact $0$ path is known for an inner segment, then 
one invocation of {\bf Digraph-flat-exact-paths} finds the exact $0$ path through these exterior pairs.
Likewise, a single invocation finds the exact $0$ path through exterior pairs ${\bf a}$ and ${\bf a}^{-1}$ when the exact
$0$ path is known for the inner segment.
So, a reduced worst case Dyck path has all exterior pairs with $s=1$.

Reduced semi-Dyck paths have exterior pairs
${\bf a}^s$ and ${\bf a}^{-s}$, for $s \in \{ \, -1, +1 \, \}$.
Reduced valleys are labeled ${\bf a}^{-1} \, {\bf a}$.
Finally, any $m > 4$ pairs of adjoining pyramids and/or valleys are replaced with $m = 4$ quads.

Worst case semi-Dyck paths are the exact $0$ paths given in Figure~\ref{worstCaseSemiDyckPaths}.
If there are only three pyramids in Figure~\ref{worstCaseSemiDyckPaths}(a), then they have two adjoining shared valleys.
The first iteration of line~5 creates a $+2$ edge at the start.
Likewise, the first iteration creates a $-2$ edge at the end.   These edges are normalized to new $\pm 1$ edges in line~7.
Finally, line~9 extends the new $\pm 1$ edges using the two exact $0$ paths just made from the two valleys. 
This gives a $+1$ edge adjoining a $-1$ edge for the next iteration.
The next iteration of {\bf Digraph-flat-exact-paths} combines these edges producing an exact $0$ path from the start to the end of this path.
In summary, semi-Dyck isolated paths with $m \leq 3$ pyramids and/or valleys become exact $0$ paths in a single invocation of {\bf Digraph-flat-exact-paths}.

\begin{figure}[ht]
 \begin{center}
  \includegraphics[height=2cm,width=6cm]{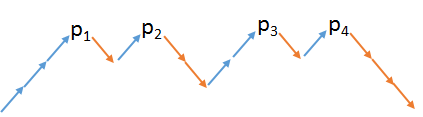}
 \end{center}
  \vspace{-0.3cm}
  \caption{In the Dyck case, if the algorithm in Figure~\ref{general-case} converts the paths $p_1, p_2, p_3$ and $p_4$ into $0$ cost edges, then these are two adjoining augmented pairs contained
  in an exterior pair}
\label{components}
\end{figure}

\begin{definition}[Augmented pairs or quads]
{\sf
  A Dyck (semi-Dyck) {\em augmented pair (quad)} is a path of two (four) adjoining pyramid (and/or valley) bases whose inner segments are $0$ cost edges.
  All adjoining bases are the same base level and all adjoining bases
  are contained by an exterior pair.
}
\end{definition}

If two augmented pairs adjoin at the same base level, then these augmented pairs become 
a single augmented pair in one iteration of Figure~\ref{general-case}. 
Another iteration replaces this single augmented pair with a $0$ cost edge.
In general, say $G'$ is a reduced Dyck grid graph with $m$ augmented pairs all adjoining at the same base level.
If all augmented pairs in $G'$ have the same maximum level peaks, then all  augmented pairs become
exact $0$ cost paths in the same iteration of Figure~\ref{general-case}.

Figure~\ref{worstCaseDyckPaths} shows how augmented pairs may be constructed.
In particular, each level of augmented pairs is contained by an exterior pair.
A new augmented pair may only form if this augmented pair has an adjoint augmented pyramid and both together are contained by an exterior pair.
The semi-Dyck augmented quads are similar, see Figure~\ref{worstCaseSemiDyckPaths}.

\begin{figure}[ht]
 \begin{center}
 \includegraphics[height=3cm,width=12cm]{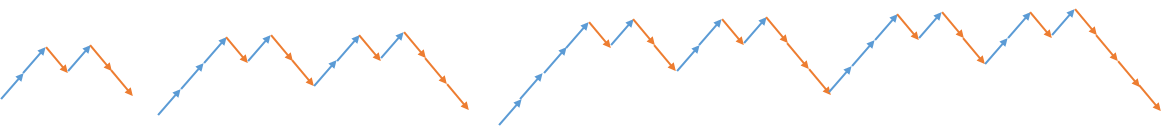} 
 \end{center}
 \vspace{-0.6cm}
\caption{Three reduced worst case Dyck paths}
\label{SomeWorstCaseDyckPaths}
\end{figure}

Figure~\ref{SomeWorstCaseDyckPaths} gives reduced Dyck path examples with $m \in \{ \ 2,4,8 \ \}$ peaks.
In general if $t(m)$ is the number of nodes $n$ in a reduced Dyck path with $m$ peaks, then
$t(m) = 2 \, t(m/2) +1$ with the base case $t(1) = 3$.
This means $t(2) = 7$, $t(4) = 15$ and $t(8) = 31$. Or if $m = 2^k$, then $n = 2^{k+2} -1$.

\begin{lemma}
{\sf
Consider a reduced Dyck grid path with $m=2^{k}$ pyramid peaks and $n=2^{k+2}-1$ nodes, for an integer $k \geq 1$.
Suppose one pyramid peak is replaced by a $0$ cost edge, then 
Figure~\ref{general-case} requires fewer than $\log m$ iterations to find the exact $0$ paths.
}
\label{imbalance}
\end{lemma}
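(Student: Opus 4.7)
The plan is to proceed by induction on $k$, exploiting the recursive decomposition $T_k=[T_{k-1}\cdot T_{k-1}]$ enclosed in one outer exterior pair. Without loss of generality I place the replaced peak in the left sub-structure $L$; the sibling $R$ is a standard, unreplaced $T_{k-1}$, sharing a junction vertex $v$ with $L$ at the common base level. By the inductive hypothesis, $L$ is collapsed into a single exact $0$-cost edge after strictly fewer than $\log(m/2)=k-1$ iterations of Figure~\ref{general-case}; call this iteration index $i_L<k-1$. By the companion worst-case analysis that justifies the bound $\log m$ for the reduced worst case in the first place, $R$ is collapsed into a single $0$-cost edge after at most $k-1$ iterations.

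The key step is to show that the outer exterior pair closes in the very same iteration $R$ finishes, rather than costing one extra iteration. Since $L$ has been a $0$-cost edge since iteration $i_L$, the $M^2$ step of line~3 of Figure~\ref{general-case} in every subsequent iteration composes the original outer $+1$ edge (the leftmost edge of the exterior pair) with $L$, yielding a $+1$ edge from the left endpoint of $T_k$ all the way to the junction $v$; symmetrically, the outer $-1$ edge extends backwards into $R$ through whatever $0$-cost sub-edges have progressively appeared inside $R$. When $R$ itself finally collapses at iteration $k-1$, the inner segment $L\cdot R$ of the outer exterior pair is an adjoining pair of $0$-cost edges, which is a flat Dyck structure and so appears as a single $0$-cost inner segment to the invocation of $\mbox{\bf Digraph-flat-exact-paths}$ in line~1 of that iteration; by Corollary~\ref{enclosing} this invocation closes the outer pair within the same iteration, producing the exact $0$-cost edge spanning all of $T_k$. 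Thus the algorithm completes in at most $k-1$ iterations, which is strictly fewer than $\log m=k$.

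The main obstacle is the bookkeeping of which auxiliary $\pm 1$ edges actually persist in $M$ across the normalize-and-divide-by-$2$ step of Figure~\ref{general-case} between iteration $i_L+1$ and iteration $k-1$: one must confirm that the composite $+1$ edge reaching the junction $v$ is either retained or regenerated at each iteration, so that the closure argument above does not rely on a single fragile edge. The verification uses the fact that both the original outer $+1$ edge and $L$'s $0$-edge remain available (originals are reinstated by line~2 of Figure~\ref{general-case} and $0$-edges are preserved by $\mbox{\bf Digraph-flat-exact-paths}$), so the composite $+1$ edge to $v$ is recreated by line~3 in every subsequent iteration. I would verify the base case $k=2$ by a direct trace using Corollary~\ref{m-pyramids} and Corollary~\ref{enclosing}; the degenerate case $k=1$ is essentially vacuous, since a single remaining peak together with the replaced $0$-edge inside one exterior pair is a flat structure handled entirely inside the first invocation of $\mbox{\bf Digraph-flat-exact-paths}$.
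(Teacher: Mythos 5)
Your induction rests on two claims that do not hold in the paper's model, and together they cost you exactly the strict inequality the lemma asserts. First, you claim the untouched sibling $R$ (a full reduced worst-case path with $m/2=2^{k-1}$ peaks) collapses within $k-1$ iterations ``by the companion worst-case analysis.'' That analysis gives the opposite: the recurrence $h(n)\le h(\lceil n/2\rceil)+1$ behind Lemma~\ref{worst-case-calculation} yields roughly $\log m'+1$ (indeed $\lceil\log n\rceil$) iterations for a full reduced path with $m'$ peaks, and the smallest instance ($m'=2$, which is precisely the top path of Figure~\ref{NonFlatCases}) is shown in the paper to defeat a single invocation of {\bf Digraph-flat-exact-paths}, so it needs two iterations of Figure~\ref{general-case}, not one. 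With the correct bound for $R$ your arithmetic gives at most $\log m$ iterations, not fewer than $\log m$, so the conclusion evaporates. Second, the step where the outer exterior pair closes ``in the very same iteration $R$ finishes'' misapplies Corollary~\ref{enclosing}: its hypothesis is that the inner segment is already a $0$ cost edge of $G_\ell$ at the start of the invocation, i.e.\ produced in an earlier iteration. The discussion around Figure~\ref{NonFlatCases} exists precisely to show that when the inner segment is completed inside the current invocation, the enclosing exterior pair is generally not closed in that same invocation --- that failure is the reason Figure~\ref{general-case} iterates at all. So in general the outer pair costs an additional iteration, again destroying the strict bound.

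The deeper issue is that your decomposition localizes the effect of the deleted peak to one half, whereas the content of the lemma is that one missing peak unbalances the pairing structure globally: at each stage some pyramid or augmented pair is left without a partner at its level, becomes an exact $0$ cost edge early, and this imbalance cascades upward through the nesting. The paper proves the lemma by contradiction with exactly this counting argument on peaks and augmented pairs that cannot find partners; an induction of the form ``left half is fast, right half is worst case, outer pair is free'' cannot recover the savings, because the savings come from the cascade through every level, not from the half containing the deleted peak finishing early. If you want to salvage an inductive proof, the inductive statement would have to track how the unpaired structure propagates into the sibling and the enclosing pairs, which is essentially the paper's argument in different clothing.
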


\begin{proof}
In a grid graph let $p$ be a reduced Dyck path with $m -1$ pyramid peaks, for $m = 2^k$ where $k \geq 1$.
These peaks are all at maximum level $m$.  
Suppose, for the sake of a contradiction, determining that this path $p$ is an exact $0$ cost path
requires $\log m$ iterations.

Since there are $m = 2^{k} -1$ pyramid peaks at level $m$, there is at least one pyramid peak that is not in an isolated pair.
This single isolated pyramid becomes an exact $0$ cost path in the first iteration of Figure~\ref{general-case}.
So this exact $0$ cost path cannot contribute to creating a new augmented pair by an invocation of {\bf Digraph-flat-exact-paths}.
Therefore, at the start of the second iteration of Figure~\ref{general-case}, there is an augmented pair that does not have an
equivalent neighboring pair to form a new augmented pair.
Hence, in the third iteration, there is a new augmented pair just created from two augmented pairs that will not have an equivalent 
neighboring pair to form a new augmented pair.

In general, the after $\log m -1$ iterations of this process, there are at least
$$m \sum_{i=1}^{\log m -1} \frac{1}{2^i}$$
pyramid peaks that started at the maximum level $m$ but they could not form a new augmented pair.
This sum is larger than $\frac{1}{2}$ for $m \geq 4$. For $m=2$ peaks, dropping one peak gives a path with one peak which requires 
one invocation of the algorithm in Figure~\ref{general-case}.

In summary, this means the remaining peaks already formed an exact $0$ cost path.
Thus, giving a contradiction since $\log m -1$ iterations are sufficient to find this exact $0$ cost path.
\end{proof}

In general suppose $t_s(m)$ is the number of nodes $n$ in a reduced semi-Dyck path with $m = 4^{k}$ peaks, for $k \geq 0$.
Therefore, 
$t_s(m) = 4t_s(m/4) -1$ with the base case $t_s(1) = 3$.
This means $t_s(4) = 11$ and $t_s(16) = 43$. Or if $m = 4^k$, then $n = \frac{2^{2k+1}+1}{3}$, see~\cite[A007583]{S}.

A proof of the next lemma is similar to the proof of Lemma~\ref{imbalance}.

\begin{lemma}
{\sf
Consider a reduced semi-Dyck grid path with $m=4^{k}$ pyramids or valleys and $n=\frac{2^{2k+1}+1}{3}$ nodes, for an integer $k \geq 1$.
Suppose one pyramid or valley is replaced by an exact $0$ path, then 
Figure~\ref{general-case} requires fewer than $\log_4 m$ iterations.
}
\label{imbalance2}
\end{lemma}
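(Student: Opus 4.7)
The plan is to mirror the contradiction argument of Lemma~\ref{imbalance}, but with augmented quads replacing augmented pairs as the fundamental combining unit, since a semi-Dyck reduction step collapses four adjoining pyramid/valley bases into one (as in Figure~\ref{worstCaseSemiDyckPaths}) rather than two. I would start by fixing a reduced semi-Dyck grid path with $m = 4^{k}$ peaks or valleys all sitting at the maximum level, then replace one of them by an exact $0$ cost edge, leaving $4^{k}-1$ active structures at the top level. Assume for contradiction that $\log_4 m$ iterations of Figure~\ref{general-case} are required.

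The key observation is arithmetic: $4^{k}-1 \not\equiv 0 \pmod{4}$, so at the top level at least one pyramid/valley cannot be packed into a full isolated quad of four adjoining neighbors. That orphan is itself (the boundary cases of) an isolated path of length at most $3$, so by Corollary~\ref{m-pyramids} a single invocation of \textbf{Digraph-flat-exact-paths} turns it into an exact $0$ cost edge during the first iteration. Consequently this orphan cannot contribute a fourth partner to any augmented quad formed in the second iteration, and the shortage propagates upward: each subsequent iteration collapses groups of four augmented structures at some base level into single augmented structures one level higher, and the same divisibility obstruction leaves at least one augmented structure at each level without the three partners it needs.

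Counting the stranded structures across levels gives a bound analogous to the Dyck case: the number of top-level peaks/valleys that get absorbed into exact $0$ cost paths without ever forming a new augmented quad is at least
\[
m \sum_{i=1}^{\lceil \log_4 m \rceil - 1} \frac{1}{4^{i}},
\]
which is strictly positive for $k \geq 1$. Since every top-level peak/valley must eventually be covered by the global exact $0$ path for the whole grid path, the stranded structures must already lie on an exact $0$ cost path established before iteration $\log_4 m$, contradicting the assumption that $\log_4 m$ iterations are needed.

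The main obstacle I expect is the bookkeeping for semi-Dyck paths, because both pyramids and valleys may sit at the same base level and interleave inside a single augmented quad, and the reduction arguments based on Corollary~\ref{enclosing} (collapsing exterior pairs with $|s|>1$) and Corollary~\ref{m-pyramids} (collapsing $m>4$ isolated structures to quads) have to be applied uniformly at every level before the divisibility counting kicks in. Once that structural reduction is pinned down, the counting is essentially identical to Lemma~\ref{imbalance} with base $4$ replacing base $2$.
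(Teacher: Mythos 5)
Your proposal follows essentially the same route the paper intends: the paper gives no separate proof of this lemma, saying only that it is ``similar to the proof of Lemma~\ref{imbalance}'', and your argument is exactly that proof transplanted to base $4$, with augmented quads in place of augmented pairs, the orphaned top-level pyramid/valley collapsing in the first iteration, and the same cascading-shortage count leading to the contradiction. The only caveat is the boundary case $k=1$, where your sum $m\sum_{i=1}^{\log_4 m-1}4^{-i}$ is empty (hence $0$, not strictly positive), so it needs the same kind of separate remark the paper's Dyck proof makes for $m=2$.
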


Lemmas~\ref{imbalance} and~\ref{imbalance2} show the worst case Dyck or semi-Dyck paths have maximum peaks at height $O(\log m)$.
Moreover, these results indicate it is sufficient to consider $m$ peaks when $m$ is a power of~2 for the Dyck case and powers of~4 for the semi-Dyck case.

\begin{lemma}
{\sf
  Given an LDG $G = (\Sigma, V, E_1)$, where $\Sigma$ is Dyck (semi-Dyck) and $|\Sigma| = 2$, then a worst case path for
  iterations of Figure~\ref{general-case} is given by Figure~\ref{worstCaseDyckPaths} (Figure~\ref{worstCaseSemiDyckPaths}).
}
\label{worst-case}
\end{lemma}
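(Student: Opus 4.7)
The plan is to show that any Dyck (semi-Dyck) grid path $p$ can be ``reduced'' to a path of the form in Figure~\ref{worstCaseDyckPaths} (Figure~\ref{worstCaseSemiDyckPaths}) without lowering the number of iterations of Figure~\ref{general-case} needed to collapse $p$ into a single exact $0$ cost edge. Since the figures are themselves Dyck (semi-Dyck) paths, establishing this reduction is enough to conclude they realize the worst case.

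First I would collect the reductions that are already essentially proved elsewhere in the paper. By Corollary~\ref{m-pyramids}, any isolated path of $m > 2$ pyramids (for Dyck) is converted to an exact $0$ cost edge by a single invocation of \textbf{Digraph-flat-exact-paths}, and the identical invocation also collapses an isolated pair. Hence replacing any occurrence of $m > 2$ adjoining pyramids at a fixed base level by the isolated pair $m = 2$ does not decrease the iteration count of Figure~\ref{general-case}. The analogous statement for semi-Dyck holds with $m > 4$ reduced to $m = 4$, as justified by the worked three-peak and four-peak analysis immediately preceding the lemma. Similarly, by Corollary~\ref{enclosing}, once the inner segment has been resolved to an exact $0$ cost edge, any exterior pair ${\bf a}^{s},{\bf a}^{-s}$ with $s > 1$ is absorbed in a single invocation; so the worst case has $s = 1$ (for semi-Dyck the analogous reduction yields $s \in \{-1,+1\}$). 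Combining these reductions, a worst case path is built solely from exterior pairs with $s = 1$, isolated pyramid pairs (Dyck) or isolated quads of pyramids/valleys (semi-Dyck), and recursively nested inner segments of the same form.

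Next I would argue that the recursion must be perfectly balanced. Lemma~\ref{imbalance} shows that in the Dyck case, if at some level of the recursion one of the two child blocks is missing or is an isolated pyramid rather than a full pair, then strictly fewer than $\log m$ iterations suffice; Lemma~\ref{imbalance2} gives the analogous bound $\log_4 m$ in the semi-Dyck case. Thus the worst case requires both children at every level of the recursion to be full, matching subpaths, which forces the number of peaks to be a power of $2$ in the Dyck case and a power of $4$ in the semi-Dyck case. The unique balanced structure produced by iterating the $t \mapsto t+1$ construction in the figures is therefore exactly the extremal family: each augmented pair (quad) is enclosed by the minimal exterior pair $s=1$, and each outer level is the adjoinment of two (four) copies of the next level in, with recursion depth $\log m$ ($\log_4 m$). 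This matches Figure~\ref{worstCaseDyckPaths} and Figure~\ref{worstCaseSemiDyckPaths} respectively.

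Finally, I would verify by induction on $k$ that the path of depth $k$ in the figure actually requires $k+1$ iterations of Figure~\ref{general-case}, so that the constructed family attains the upper bound and is consequently worst case. The base cases $t = 1$ (an empty block, yielding two adjoining pyramids for Dyck and four for semi-Dyck) take one iteration by Corollary~\ref{m-pyramids}. The inductive step uses Convention~\ref{invocations}: one invocation cannot simultaneously resolve an augmented pair (quad) and its enclosing exterior pair, because both reductions need the inner segment to already be an exact $0$ cost edge, which is guaranteed only at the start of the next iteration of Figure~\ref{general-case}. The main obstacle is precisely this last point: showing that no clever interleaving inside a single invocation of \textbf{Digraph-flat-exact-paths} can resolve two nested recursion levels at once. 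I would handle it by tracking, via Lemma~\ref{powers-of-two}, the $E_1$-length and the level structure of the $\pm 1$ edges produced inside the invocation, and observing that an enclosing $+1$ extension and its matching $-1$ extension cannot meet until after the inner augmented structure has been deposited as a genuine $0$ cost edge in the adjacency matrix, which happens only once Figure~\ref{general-case} reaches line~2 of its next iteration.
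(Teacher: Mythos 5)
Your proposal is correct and follows essentially the same route as the paper: reduce via Corollary~\ref{m-pyramids} and Corollary~\ref{enclosing} to isolated pairs (quads) with exterior pairs $s=1$, then invoke Lemma~\ref{imbalance} (Lemma~\ref{imbalance2}) to force a balanced recursion with $m=2^k$ (resp.\ $4^k$) peaks matching Figure~\ref{worstCaseDyckPaths} (Figure~\ref{worstCaseSemiDyckPaths}). Your closing induction showing the constructed family actually needs $k+1$ iterations only makes explicit what the paper leaves implicit (via Convention~\ref{invocations} and Lemma~\ref{worst-case-calculation}), so it is a welcome elaboration rather than a different argument.
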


\begin{proof}
Without loss, consider only reduced Dyck paths.
  
All Dyck paths are built from pyramids.
In reduced paths, all pyramids are in isolated pyramid pairs.
The reduced pyramids are ${\bf a} \, {\bf a}^{-1}$.
Furthermore, all isolated pyramid pairs have an exterior pair ${\bf a}^s$ and ${\bf a}^{-s}$ where $s=1$.

By Lemma~\ref{imbalance}, the reduced worst case must start with $m = 2^{k}$ pyramid peaks, for $k \geq 1$, at maximum level $m$.
The algorithm in Figure~\ref{general-case} finds the exact $0$ path in such paths in $\lceil \log m \rceil \leq \lceil \log n \rceil$ iterations,
since $m \leq n$.
  
A similar argument holds in the semi-Dyck case.
The main difference is: semi-Dyck paths are built from pyramids and valleys.
Moreover, 
adjoining isolated paths may be reduced to $m=4$ adjoining elements that are isolated together.
Given these differences, all the Dyck arguments just presented remain the same.

This completes the proof.
\end{proof}

In the next lemma, height of a grid graph is the $y$ value of maximum peak $(x,y)$.

\begin{lemma}
{\sf
  Given an LDG $G = (\Sigma, V, E_1)$, where $\Sigma$ is Dyck (semi-Dyck) and $|\Sigma| = 2$, then in the worst case finding all
  exact $0$ paths takes $\lceil \log n \rceil$ iterations of 
  Figure~\ref{general-case}.
}
\label{worst-case-calculation}
\end{lemma}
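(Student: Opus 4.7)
The plan is to combine the worst-case structural characterization (Lemma~\ref{worst-case}) with the depth bounds from Lemmas~\ref{imbalance} and~\ref{imbalance2}, and then translate those depth counts, which are stated in terms of the number of peaks $m$, into bounds in terms of $n$.

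First I would invoke Lemma~\ref{worst-case} to restrict attention to the canonical shapes in Figure~\ref{worstCaseDyckPaths} (Dyck) and Figure~\ref{worstCaseSemiDyckPaths} (semi-Dyck). Any LDG that is not already reduced---for instance one with an isolated path of $m>2$ adjoining pyramids, or with an exterior pair ${\bf a}^s\,{\bf a}^{-s}$ for $s>1$---is contracted by Corollaries~\ref{m-pyramids} and~\ref{enclosing} in a single invocation of {\bf Digraph-flat-exact-paths}. Thus, under Convention~\ref{invocations}, such structures cannot drive up the iteration count of the outer loop of Figure~\ref{general-case}, and it suffices to bound the iterations on the reduced worst-case shape.

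Next, for the Dyck case I would take the worst-case path of Figure~\ref{worstCaseDyckPaths} with $m = 2^k$ peaks and $n = 2^{k+2}-1$ nodes. Lemma~\ref{imbalance} says that fewer than $\log m = k$ iterations of Figure~\ref{general-case} suffice. Since $n+1 = 4m$, we have $k = \log(n+1) - 2 \leq \lceil \log n \rceil$. For semi-Dyck, the worst case (Figure~\ref{worstCaseSemiDyckPaths}) has $m = 4^k$ pyramids/valleys and $n = (2^{2k+1}+1)/3$, so Lemma~\ref{imbalance2} yields fewer than $\log_4 m = k = (\log(3n-1)-1)/2$ iterations, again strictly less than $\lceil \log n \rceil$. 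A tiny check handles the case that $m$ is not itself a power of $2$ (or $4$): rounding $m$ up to the next such power only weakens the depth bound, so the derived worst case dominates.

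The main obstacle I anticipate is justifying that a single outer iteration of Figure~\ref{general-case} really does descend exactly one level of the recursive nesting, no matter how many independent sub-paths or exterior pairs sit at a given level. The key observation is that lines~1--4 of Figure~\ref{general-case} are realized by algebraic matrix multiplications that act uniformly on every $(i,j)$-pair of $V \times V$: every isolated pair at the current level is reduced, every exterior pair around an already-resolved inner segment is absorbed via Corollary~\ref{enclosing}, and the resulting $\pm 1$ extensions are normalized, all in one pass. Consequently the outer loop depth equals the recursion depth of the worst-case tree, which by the two arithmetic bounds above is at most $\lceil \log n \rceil$.
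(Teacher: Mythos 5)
Your overall strategy---restrict attention to the reduced worst-case shapes via Lemma~\ref{worst-case} and then count iterations as the nesting depth---is the same route the paper takes. However, the step where you actually obtain the iteration bound does not hold up: you cite Lemma~\ref{imbalance} (resp.\ Lemma~\ref{imbalance2}) as saying that the unperturbed worst-case path of Figure~\ref{worstCaseDyckPaths} (resp.\ Figure~\ref{worstCaseSemiDyckPaths}) with $m=2^k$ (resp.\ $m=4^k$) peaks is finished in fewer than $\log m$ (resp.\ $\log_4 m$) iterations. Those lemmas say no such thing: their hypothesis is that one pyramid peak (pyramid or valley) has already been replaced by a $0$ cost edge, and the ``fewer than $\log m$'' conclusion holds only for that perturbed configuration. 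Indeed their whole purpose is to show that the balanced, unperturbed configuration is strictly harder, which is how Lemma~\ref{worst-case} identifies it as the worst case. Applying them to the unperturbed path uses the conclusion in the opposite direction from what they establish, so the $m$-to-$n$ arithmetic that follows rests on a bound you have not derived. (The arithmetic itself is also unnecessary: $m \le n$ already gives $\lceil \log m \rceil \le \lceil \log n \rceil$.)

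The idea that actually closes the gap is the one you relegate to your final paragraph as an ``anticipated obstacle.'' In the reduced worst case the isolated (augmented) pairs double at each level, so after the first pass each subsequent iteration of Figure~\ref{general-case} halves the number of augmented pairs; writing $h(n)$ for the height of a worst-case path, this gives $h(n) \le h(\lceil n/2 \rceil) + 1$, hence at most $O(\log n)$, i.e., $\lceil \log n \rceil$ invocations of {\bf Digraph-flat-exact-paths}, one charged per level. That halving recurrence is precisely the paper's proof of this lemma; your remark that the matrix products act uniformly on all pairs gestures at it, but it must be promoted to the main argument (one level of the recursive structure resolved per iteration) rather than left as a side justification, and once it is, Lemmas~\ref{imbalance} and~\ref{imbalance2} are not needed in this proof at all.
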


\begin{proof}
Without loss, the focus is on Dyck paths.
Let $h(n)$ be the height of a worst-case Dyck path for iterations of the algorithm in Figure~\ref{general-case}.

By Lemma~\ref{worst-case}, the worst-case paths must double their number of isolated pairs at each level.
This means after the first iteration, each subsequent iteration of the algorithm in Figure~\ref{general-case} halves the number of augmented pairs.

Therefore, $h(n) \leq  h(\lceil n/2 \rceil) +1$ which immediately means $h(n) = O(\log n)$.
In particular, the additive term of~$1$ is for each invocation of 
{\bf Digraph-flat-exact-paths}.
\end{proof}

This section culminates in the main theorem.

\begin{theorem}
{\sf
  Given an LDG $G = (\Sigma, V, E_1)$, where $\Sigma$ is Dyck (semi-Dyck) and $|\Sigma| = 2$,
  then Figure~\ref{general-case}'s algorithm solves Dyck (semi-Dyck) reachability in $O(n^{\omega} \log^{3} n)$ time.
}
\label{MAIN}
\end{theorem}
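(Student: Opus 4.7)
The plan is to assemble the theorem from three pieces already developed in the paper. First, I reduce the Dyck (semi-Dyck) reachability question to finding exact $0$ cost paths by invoking Lemma~\ref{semi-DYCK} for the semi-Dyck case and Lemma~\ref{DYCK} (together with the non-negative-prefix-sum preservation argument used in Lemma~\ref{inefficient-Dyck}) for the Dyck case. This lets me focus exclusively on proving that the iterative algorithm of Figure~\ref{general-case} finds every exact $0$ cost path between every pair of vertices. Second, I bound the number of outer iterations using the worst-case analysis already carried out. Third, I bound the per-iteration cost.

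For correctness, my plan is induction on the iteration count $\ell$ of Figure~\ref{general-case}, with inductive hypothesis: after $\ell$ iterations, every exact $0$ path whose grid representation has height at most $\ell$ is recorded as a $0$ cost edge in $M$, and every original $\pm 1$ edge extended by such paths is recorded appropriately. The base case $\ell=1$ is exactly the flat case, handled by the invocation of \textbf{Digraph-flat-exact-paths} combined with the $M \leftarrow M^2$ step that stitches $\pm 1$ edges onto the freshly found exact $0$ subpaths. For the inductive step, I appeal to Corollary~\ref{m-pyramids} (for isolated paths at a common base level) and Corollary~\ref{enclosing} (for exterior pairs enclosing a subpath already reduced to an exact $0$ edge), together with the reduction-to-augmented-pairs arguments surrounding Figures~\ref{worstCaseDyckPaths} and~\ref{worstCaseSemiDyckPaths}: any exact $0$ path of height $\ell+1$ decomposes into inner segments of height $\leq \ell$ (now $0$ cost edges by hypothesis) assembled inside exterior pairs and isolated groupings, which a single additional iteration of Figure~\ref{general-case} converts into a single $0$ cost edge. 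Lemma~\ref{worst-case-calculation} guarantees this induction terminates at $\ell = \lceil \log n \rceil$.

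For the running time, I simply sum costs per iteration of Figure~\ref{general-case}. Line~1 invokes \textbf{Digraph-flat-exact-paths}, which performs $O(\log n)$ AGMY algebraic matrix multiplications, each of cost $O(n^{\omega} \log n)$ (from the AGMY bit-length corollary following Lemma~\ref{mm-bound}), plus an $O(n^2 \log n)$ normalization per inner step; the invocation thus costs $O(n^{\omega} \log^2 n)$. Lines~2--4 contribute one additional AGMY multiplication and one normalization, at a cumulative cost of $O(n^{\omega} \log n)$. Hence each outer iteration costs $O(n^{\omega} \log^2 n)$, and since Lemma~\ref{worst-case-calculation} bounds the number of iterations by $\lceil \log n \rceil$, the total cost is $O(n^{\omega} \log^3 n)$.

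The main obstacle I expect is the correctness step: justifying that the inductive hypothesis really is preserved across an outer iteration even in the Dyck case, where the non-negative prefix sum restriction has to be maintained throughout all the intermediate matrix products and normalizations. Concretely, I need to verify that the Dyck-specific markup in Figure~\ref{New-Min-Plus-Dyck} propagates correctly when $\pm 1$ edges from earlier iterations (representing \emph{composite} paths, not original edges) are reused as the factors $M$ in line~3 of Figure~\ref{general-case}; otherwise a spuriously reduced $-1$-then-$+1$ junction inside a composite edge could yield a non-Dyck exact $0$ path. Resolving this amounts to observing that any composite $-1$ edge already records a negative prefix sum, so the cubic-term detection inside \emph{Normalize\_and\_Divide\_by\_2} still flags and deletes precisely those $0$ edges whose formation violates the Dyck condition at the only place where a new junction can be introduced, namely in line~3 via the AGMY encoding.
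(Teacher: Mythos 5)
Your proposal is correct and takes essentially the same route as the paper, which leaves this theorem's proof implicit in the preceding development: reduce Dyck (semi-Dyck) reachability to exact $0$ cost paths, bound the number of outer iterations of Figure~\ref{general-case} by $\lceil \log n \rceil$ via Lemmas~\ref{worst-case} and~\ref{worst-case-calculation}, and charge each iteration $O(n^{\omega}\log^{2} n)$ for the $O(\log n)$ AGMY products inside {\bf Digraph-flat-exact-paths} plus one extra product for lines~2--4, yielding $O(n^{\omega}\log^{3} n)$. One small citation slip: because of the paper's swapped label names, the semi-Dyck equivalence is Lemma~\ref{DYCK} and the Dyck (non-negative prefix sum) equivalence is Lemma~\ref{semi-DYCK}, the opposite of how you reference them.
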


\section{Determining $\pm 1$ reachability}
\label{sec:plus-minus-one}

Determining $\pm 1$ path reachability in an LDG is based on $0$ cost edges computed by the algorithm in Figure~\ref{general-case}.
After running this algorithm, each $0$ cost edge represents an exact $0$ cost reachability path. 
This reachability is either Dyck and semi-Dyck reachability.

\begin{definition}
{\sf
Let $E^{*}$ contain all $0$ cost edges found by running the algorithm in Figure~\ref{general-case} $\lceil \log n \rceil$ times.
}
\end{definition}

The focus is on the $E_1$ edges of $G_1 = (\Sigma, V, E_1)$ in combination with the exact $0$ cost edges $E^{*}$.

In the case of $\pm 1$ reachability, consider the next paths from $i$ to $j$. Both $i \longrightarrow k_1$ and $k_2 \longrightarrow j$ are edges in $E_1$
and $k_1 \stackrel{0}{\curvearrowright} k_2$ is an exact $0$ cost edge from $E^{*}$.
So, all cases for $i \stackrel{u}{\longrightarrow} k_1$ and $k_2 \stackrel{v}{\longrightarrow} j$ so that $u,v \in \{ \, -1, +1 \, \}$ are:

      \begin{eqnarray*}
        i \stackrel{+1}{\longrightarrow} k_1 \stackrel{0}{\curvearrowright} k_2 \stackrel{+1}{\longrightarrow} j,\\
        i \stackrel{+1}{\longrightarrow} k_1 \stackrel{0}{\curvearrowright} k_2 \stackrel{-1}{\longrightarrow} j,\\
        i \stackrel{-1}{\longrightarrow} k_1 \stackrel{0}{\curvearrowright} k_2 \stackrel{+1}{\longrightarrow} j,\\
        i \stackrel{-1}{\longrightarrow} k_1 \stackrel{0}{\curvearrowright} k_2 \stackrel{-1}{\longrightarrow} j.
      \end{eqnarray*}
If these are the only paths from $i$ to $j$, then there is no exact $\pm 1$ path from $i$ to $j$.

Of course, all edges in $E^{*}$ are built from edges in $E_1$.
An edge is {\em directly} from $E_1$ if it is not in an exact $0$ cost edge under discussion.

\begin{lemma}
  {\sf
  Given an LDG $G_1 = (\Sigma, V, E_1)$,
  its exact $0$ cost paths are $0$ edges in $E^{*}$ where $\Sigma$ is Dyck (semi-Dyck) and $|\Sigma| = 2$, then
  all $\pm 1$ paths can be found by extending all $\pm 1$ edges in $E_1$ with only $0$ cost edges in $E^{*}$.
}
\label{single_PM_edge}
\end{lemma}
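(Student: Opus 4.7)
The plan is to show that every exact $\pm 1$ path $p$ from $i$ to $j$ admits a decomposition $p = q_1 \cdot e \cdot q_2$, where $e$ is a single $\pm 1$ edge of $E_1$ and each of $q_1, q_2$ is either empty or an exact $0$ cost reachability path (so represented by a $0$ edge in $E^{*}$). Enumerating all $\pm 1$ edges in $E_1$ flanked by all adjoining pairs of (possibly empty) $0$ edges from $E^{*}$ then witnesses every such $\pm 1$ path, proving the lemma.

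To locate the splitting edge I would use a prefix-sum argument. Write $p = i_0 \stackrel{w_1}{\longrightarrow} i_1 \cdots \stackrel{w_n}{\longrightarrow} i_n$ with $i_0 = i$, $i_n = j$, and prefix sums $s_t = w_1 + \cdots + w_t$; assume without loss of generality that $s_n = +1$ (the $-1$ semi-Dyck case is handled symmetrically, and the Dyck $-1$ case is vacuous under the non-negative prefix sum convention). In the semi-Dyck case, pick any $t^*$ with $s_{t^*} = 0$ and $s_{t^*+1} = +1$, which must exist because the $\pm 1$-valued sequence $s_0, \ldots, s_n$ starts at $0$ and ends at $+1$. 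In the Dyck case, where $p$ carries non-negative prefix sums, pick $t^*$ as the \emph{last} index at which $s_{t^*} = 0$; non-negativity then forces $s_{t^*+1} = +1$ and $s_s \geq 1$ for all $s > t^*$.

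Let $e$ be the edge at position $t^*+1$; by construction $e$ is a $+1$ edge in $E_1$. The prefix $q_1$ from $i$ to $i_{t^*}$ has cost $s_{t^*} - s_0 = 0$, and the suffix $q_2$ from $i_{t^*+1}$ to $j$ has cost $s_n - s_{t^*+1} = 0$. Lemma~\ref{DYCK} then certifies both $q_1$ and $q_2$ as semi-Dyck $0$ reachabilities in the semi-Dyck case. In the Dyck case, the prefix sums of $q_1$ are $s_0, \ldots, s_{t^*}$ (all non-negative by the hypothesis on $p$), and those of $q_2$, after shifting down by $1$ so as to start from $0$, are $s_{t^*+1}-1, \ldots, s_n-1$, all non-negative by the maximality of $t^*$; Lemma~\ref{semi-DYCK} then certifies both as Dyck $0$ reachabilities. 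By the hypothesis on $E^{*}$, each nonempty subpath corresponds to a $0$ edge in $E^{*}$, and empty subpaths need no extension.

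The main obstacle is the Dyck case, where the chosen split must preserve non-negative prefix sums on both sides. Taking the latest zero-crossing $t^*$ is precisely the choice that works: the left piece inherits non-negativity from $p$, and the right piece stays strictly above level $0$ by maximality, so that removing the $+1$ jump $e$ shifts it down to a non-negative subpath. No other step is delicate, since any $\pm 1$ edge traversed by $p$ must lie in $E_1$ (every edge in $E^{*}$ carries cost $0$), so the splitting edge $e$ is automatically from $E_1$.
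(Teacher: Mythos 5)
Your proof is correct, but it follows a genuinely different route from the paper's. You argue constructively: writing the prefix sums of an exact $\pm 1$ path and splitting at a zero chosen just before the walk leaves level $0$ for good (any zero-to-$+1$ crossing in the semi-Dyck case, the last zero in the Dyck case), you decompose the path into a $0$-cost prefix, a single $E_1$ edge, and a $0$-cost suffix, then invoke Lemma~\ref{DYCK} (resp.\ Lemma~\ref{semi-DYCK}) to certify the flanking pieces as semi-Dyck (resp.\ Dyck) exact $0$ reachabilities, hence $0$ edges of $E^{*}$. The paper instead argues by contradiction: it fixes two $E_1$ edges $e_1,e_2$ joined by a $0$ edge of $E^{*}$, splits on whether their signs agree, and in the like-sign case brings in a third, opposite-sign edge $e_3$ to show that one of the adjacent (edge, $0$-edge, edge) groupings is itself an exact $0$ path that must already lie in $E^{*}$, so only one edge directly from $E_1$ is ever needed. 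Your version buys an explicit location for the single $E_1$ edge (the last zero crossing), a verification that the two flanking pieces really satisfy the Dyck non-negativity condition, and an explicit treatment of the Dyck case, which the paper dismisses with the remark that Dyck languages are also semi-Dyck; the paper's version stays closer to the sandwich product $E^{*}\,M\,E^{*}$ applied immediately after the lemma. One small caveat: your claim that the Dyck $-1$ case is vacuous presumes a Dyck $-1$ path must itself have all prefix sums non-negative; under the paper's operational notion (a $-1$ edge of the Dyck sign-closure ${\bf unsign}^{\geq}(G)$, which arises only by prefixing Dyck $0$ segments to a final $-1$ edge of $E_1$), such paths do exist, but they satisfy your decomposition with an empty suffix $q_2$, so the argument extends with one extra sentence rather than by vacuity.
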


\begin{proof}
Without loss, the semi-Dyck case is the focus. 
Recall, semi-Dyck paths are paths with the same number of $+1$ and $-1$ edges from $E_1$.

  Consider the edge $e_1 = i \stackrel{\pm 1}{\longrightarrow} k_1 \in E_1$ and another edge $e_2 = k_2 \stackrel{u}{\longrightarrow} j$ in $E_1$, where $u \in \{ \, -1, +1 \, \}$, 
  so that there is a $0$ cost edge from  $k_1$ to  $k_2$ in $E^*$. 
Let $k_1 \stackrel{0}{\curvearrowright} k_2$ be this $0$ cost edge in $E^*$. 
Suppose, for the sake of a contradiction, that $i \stackrel{\pm 1}{\longrightarrow} j$, 
is a $\pm 1$ path that is not discovered by extending $e_1$ with edges from $E^*$.
Consider the next cases.\\

\noindent
{\bf Case 1}: If $e_1$ and $e_2$ have different signs.\\

If $e_1$ and $e_2$ have different signs, then the entire path $i \stackrel{0}{\longrightarrow} j$ is another $0$ cost edge in $E^{*}$ found by
the algorithm in Figure~\ref{general-case}.
This is a contradiction, since in this case joining $e_1$ and $e_2$ does not create a $\pm 1$ edge.\\

\noindent
{\bf Case 2}: If $e_1$ and $e_2$ have the same sign.\\

Given the exact $0$ edge $k_1 \stackrel{0}{\curvearrowright} k_2$ from $E^{*}$ between
the edges $e_1$ to  $e_2$
gives a $\pm 2$ path: $p_{1,2} = \overset{\pm 1}{e_1} \stackrel{0}{\curvearrowright} \overset{\pm 1}{e_2}$.

Say the path $p_{1,2}$ contributes to a $\pm 1$ edge in combination with another opposite sign edge $e_3$.
There are two subcases that both lead to contraditions.\\

\begin{myindentpar}{1cm}
{\bf Subcase 2a}: $\overset{\pm 1}{e_1} \stackrel{0}{\curvearrowright} \overset{\pm 1}{e_2} \stackrel{0}{\curvearrowright} \overset{\mp 1}{e_3}$.\\

This subcase gives $p_{2,3} = \overset{\pm 1}{e_2} \stackrel{0}{\curvearrowright} \overset{\mp 1}{e_3}$ which has exact cost $0$.
Thus, $p_{2,3}$ must be and edge in $E^*$. 
Therefore, the $\pm 1$ reachabilty of $\overset{\pm 1}{e_1} \stackrel{0}{\curvearrowright} \overset{\pm 1}{e_2} \stackrel{0}{\curvearrowright} \overset{\mp 1}{e_3}$ requires only $e_1 = i \stackrel{\pm 1}{\longrightarrow} k_1$ to be directly from $E_1$.\\

{\bf Subcase 2b}: $\overset{\mp 1}{e_3} \stackrel{0}{\curvearrowright} \overset{\pm 1}{e_1} \stackrel{0}{\curvearrowright} \overset{\pm 1}{e_2}$.\\

This subcase gives $p_{3,1} = \overset{\mp 1}{e_3} \stackrel{0}{\curvearrowright}  \overset{\pm 1}{e_1}$ which also has cost $0$.
Thus, $p_{3,1}$ must be in $E^*$.
This means the $\pm 1$ reachabilty of $\overset{\mp 1}{e_3} \stackrel{0}{\curvearrowright} \overset{\pm 1}{e_1} \stackrel{0}{\curvearrowright} \overset{\mp 1}{e_2}$ requires only $e_2 = k_2 \stackrel{\pm 1}{\longrightarrow} j$ to be directly from $E_1$.\\

\end{myindentpar}

Since Dyck languages are also semi-Dyck, the proof is complete.
\end{proof}

Finding all exact $0$ cost edges in an LDG $G$ with all edges initially labeled from $\{ \, -1, 1 \, \}$ is the same as determining all $0$ reachability in $G$.

Consider the output $E^*$ from Figure~\ref{general-case}.
Lemma~\ref{single_PM_edge} indicates finding all exact $\pm 1$ cost paths may be computed as 
the AGMY matrix product,

\begin{eqnarray*}
E^{*} \, M \, E^{*}
\end{eqnarray*}

\noindent
where $M$ is the adjacency matrix of the given LDG $G = (V,E_1)$.
This costs $\widetilde{O}(n^{\omega})$.

\section{Conclusion}

Combining Theorem~\ref{MAIN} with Lemma~\ref{single_PM_edge} gives useful results for finding
exact Dyck and semi-Dyck paths in digraphs.

Starting with an LDG $G = (\Sigma, V, E_1)$, where $\Sigma$ is Dyck with $|\Sigma| = 2$,
then all $\{ -1, 0, +1 \}$~cost edges can be found in
in $O(n^{\omega} \log^3 n)$ time.
A number of powerful techniques can reduce this cost by polylog factors giving $\widetilde{O}(n^{\omega})$, see~\cite{AGM,AHU:1974,C,Zwick:2002,Rytter}.
In particular, Zwick~\cite{Zwick:2002} outlines such improvements nicely.

\section*{Acknowledgments}

Thanks to Emily Proulx and Sarthak Behl for contributing example inputs and discussing applications of these results.

Thanks to Derek Morris for pointing out interesting applications as well as his encouragement.

\bibliographystyle{IEEEtran/IEEEtran}
\bibliography{IEEEabrv,pathP}

\end{document}